%% file: SOFSEM_version.tex
\documentclass[runningheads]{llncs}

\usepackage[T1]{fontenc}
\usepackage[utf8]{inputenc}
\usepackage{graphicx}
\usepackage{amsmath, amssymb, amsfonts}
\usepackage[hidelinks]{hyperref}
\usepackage{tikz}
\usepackage{cite}
\usetikzlibrary{decorations.pathreplacing,arrows.meta,shapes.geometric,decorations.markings}
\usepackage[ruled,vlined]{algorithm2e}
\tikzstyle{vertex}=[circle, draw, inner sep=0pt, minimum size=5pt]

\usepackage{orcidlink}
\DeclareMathOperator{\dist}{dist}
\DeclareMathOperator{\diam}{diam}
\usepackage{tikz}
\usetikzlibrary{arrows.meta,positioning,shapes.geometric}
\usepackage{subcaption}   

\newcommand{\proofpara}[1]{\medskip\noindent\textit{#1.}\ }
\usepackage{hyperref}
\usepackage{xcolor}

\spnewtheorem{observation}{Observation}{\bfseries}{\itshape}
\spnewtheorem{reduction}{Reduction Rule}{\bfseries}{\itshape}
\spnewtheorem{branching}{Branching Rule}{\bfseries}{\itshape}
\spnewtheorem{red}{Reduction Rule}{\bfseries}{\itshape}
\spnewtheorem{Claim}{Claim}{\bfseries}{\itshape}

\begin{document}
\title{Kernelization of $2$-Club Cluster Edge Deletion on Interval Graphs}
\titlerunning{$2$-Club Cluster Edge Deletion}
%
\author{Ajinkya Gaikwad}
\authorrunning{A.\,Gaikwad}
%
\institute{Faculty of Information Technology, Czech Technical
University in Prague, 
\\ Czech Republic
\email{\texttt{ajinkya.gaikwad@fit.cvut.cz}}}

\maketitle              
\begin{abstract}
The \emph{$s$-Club Cluster Edge Deletion} problem asks whether, given a graph
$G$ and an integer $k$, one can delete at most $k$ edges so that every remaining
connected component has diameter at most~$s$. This generalizes the classical
\emph{Cluster Edge Deletion} problem by permitting components of bounded
diameter instead of requiring cliques. On general graphs, $2$-Club Cluster Edge Deletion is known
to be fixed-parameter tractable when parameterized by $k$, but it remains open whether it admits a polynomial kernel, as posed in~\cite{ABUKHZAM2023113864}. 
Motivated by this question, we study the problem on interval graphs and obtain a
polynomial vertex kernel of size $\mathcal{O}(k^{5})$. As a complementary
result, we also show that the \emph{$s$-Club Cluster Edge Deletion} problem is polynomial time solvable on unit interval graphs. We also show that $2$-Club Cluster Edge Deletion is NP-hard even on split graphs.
\keywords{$s$-clubs \and Split graphs \and Interval graphs}
\end{abstract}

\section{Introduction}

Clustering problems under structural constraints form a central topic in
algorithmic graph theory, with applications ranging from social network
analysis~\cite{doi:10.1073/pnas.122653799,FORTUNATO201075} to bioinformatics~\cite{HARTUV2000249,Alizadeh1995} and data mining~\cite{Berkhin2006}.
A classical example is \emph{Cluster Edge Deletion}, where the goal is to delete
at most $k$ edges from a graph so that every connected component becomes a
clique.
It is known that this problem is NP-hard on general graphs~\cite{SHAMIR2004173}, and determining their computational complexity even on restricted graph classes has been the subject of extensive research~\cite{BONOMO2015600,BONOMO201559,doi:10.1142/S0129054107004656,GAO20132763}.
This \emph{Cluster Edge Deletion} problem and its variants are well studied from the perspectives of exact algorithms~\cite{Bocker2011Exact},
parameterized complexity~\cite{10.1007/3-540-44849-7_17,BOCKER20095467,cao_et_al:LIPIcs.IPEC.2021.13,TSUR2022106171,10.1007/978-3-642-16926-7_17,Huffner2010ClusterVertexDeletion}, and kernelization~\cite{GUO2009718,10.1007/978-3-642-17493-3_8,cao_et_al:LIPIcs.IPEC.2021.13}.
These problems have found applications in bioinformatics, software security (vulnerability assessment), and text classification~\cite{10.1007/11847250_2,fadiel2006computational,9253144,doi:10.1142/S1793351X20500087,9030927}.
A natural generalization replaces cliques by connected subgraphs of bounded
diameter.
For a fixed integer $s \ge 1$, an \emph{$s$-club} is a graph of diameter at most
$s$, and the \emph{$s$-Club Cluster Edge Deletion} problem asks whether a given
graph can be transformed, by deleting at most $k$ edges, into a disjoint union
of $s$-clubs.
For $s=1$, the problem coincides with Cluster Edge Deletion.
For larger values of $s$, the problem allows more flexibility inside clusters
while still enforcing a strong global structure.

A number of graph modification problems aim at transforming a graph into a disjoint union of $2$-clubs, most notably \textsc{2-Club Cluster Vertex Deletion}, \textsc{2-Club Cluster Edge Deletion}, and \textsc{2-Club Cluster Editing}. Each of these variants is NP-complete \cite{10.1007/978-3-642-29700-7_22}.
Moreover, it was shown in \cite{BAZGAN2025247} that finding the minimum number of edges to add to a split graph in order to obtain diameter at most~2 is W[2]-hard when parameterized by the number of added edges, and \cite{10.1007/978-3-030-75242-2_15} established that 2-Club Cluster Editing is W[2]-hard with respect to the number of modified edges.
On general graphs, $2$-Club Cluster Edge Deletion is known to be
fixed-parameter tractable when parameterized by $k$~\cite{ABUKHZAM2023113864}.
Abu-Khzam et al.~\cite{ABUKHZAM2023113864} presented the currently fastest parameterized algorithm, running in $\mathcal{O}^*(2.692^{k})$ time.
However, despite this progress, it remains open whether the problem admits a
polynomial size kernel on general graphs.
This motivates a systematic study of the problem on restricted graph classes,
where additional structure can be exploited to obtain stronger algorithmic
results.
In this work, we focus on two fundamental subclasses of chordal graphs:
\emph{split graphs} and \emph{interval graphs}.
Both classes arise naturally in applications and admit rich structural
characterizations.
Split graphs are graphs whose vertex set can be partitioned into a clique and an independent set, while interval graphs are intersection graphs of intervals on the real line.
These representations allow for geometric reasoning and strong ordering
properties that are unavailable in general graphs.

\section{Preliminaries}

We consider finite, simple, undirected graphs.
For a graph $G$, we denote its vertex set by $V(G)$ and its edge set by
$E(G)$.
For a vertex $v\in V(G)$, the \emph{open neighborhood} of $v$ is
$N_G(v)=\{u\in V(G)\mid uv\in E(G)\}$, and the \emph{closed neighborhood} is
$N_G[v]=N_G(v)\cup\{v\}$.
For a set $X\subseteq V(G)$, we write $G[X]$ for the subgraph of $G$ induced
by~$X$.
The \emph{distance} between two vertices $u,v \in V(G)$, denoted by $\dist_G(u,v)$, is the length of a shortest path between $u$ and $v$.
The \emph{diameter} of a connected graph $H$ is $\diam(H) = \max\{\dist_H(u,v) \mid u,v \in V(H)\}$.

\begin{definition}\cite{Foldes1977SplitGraphs}
A split graph is one whose vertex set can be partitioned as the disjoint union of an independent set $I$ and a clique $C$(either of which may be empty).
\end{definition}

Split graphs are chordal and perfect, and a split partition, that is, a partition of the vertex set into a clique and an independent set, can be found in polynomial time~\cite{Foldes1977SplitGraphs}.

\begin{definition}
A graph $G$ is an \emph{interval graph} if there exists a family of
intervals $\{I_v=[\ell_v,r_v]\mid v\in V(G)\}$ on the real line such that
$uv\in E(G)$ if and only if $I_u\cap I_v\neq\emptyset$.
Such a family is called an \emph{interval representation} of $G$.
\end{definition}

Interval graphs admit a linear ordering of vertices by increasing left
endpoints of their intervals~\cite{10.5555/984029}.
Throughout the paper, when working with an interval graph, we assume a
fixed interval representation and use the corresponding ordering
implicitly.

\section{On Split Graphs}

For $s=1$, an $s$-club is precisely a clique, since a connected graph has
diameter at most $1$ if and only if every two distinct vertices are adjacent.
Hence, $1$-\textsc{Club Cluster Edge Deletion} coincides with the classical
\textsc{Cluster Edge Deletion} problem. Since \textsc{Cluster Edge Deletion}
is solvable in polynomial time on split graphs~\cite{konstantinidis_et_al:LIPIcs.MFCS.2019.12}, the case $s=1$
is polynomial-time solvable on split graphs.
Moreover, since every connected component of a split graph has diameter at most~3,
the problem is also polynomial-time solvable on split graphs when $s\geq3$.
In contrast, we show that {\sc $s$-Club Cluster Edge Deletion} becomes NP-hard on split graphs already for $s=2$.
In particular, we prove the following theorem.

\begin{theorem}\label{split NP-hard}
{\sc $2$-Club Cluster Edge Deletion} is NP-hard on split graphs.
\end{theorem}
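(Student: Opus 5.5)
We reduce from a known NP-hard problem whose structure matches split graphs. In a split graph with clique $C$ and independent set $I$, a component of diameter at most $2$ after deletions is typically a "star of cliques": a subset of $C$ forming a clique together with pendant independent vertices that must see a common clique vertex, or be within distance two of every other vertex. The natural source problem is \textsc{Exact Cover by 3-Sets} (X3C), or equivalently a \textsc{Set Cover}-type problem, since diameter-2 constraints in split graphs behave like domination: two independent vertices $x,y\in I$ in the same component must share a common neighbour, and $x$ must be adjacent to or share a neighbour with every clique vertex of its component.

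\textbf{Construction.} Given an X3C instance with universe $U=\{u_1,\dots,u_{3q}\}$ and triples $\mathcal{S}=\{S_1,\dots,S_m\}$, build the clique $C=\{c_1,\dots,c_m\}$, one vertex per triple. For each element $u_i$ put an independent vertex $x_i$ adjacent to $c_j$ for every $S_j\ni u_i$. To force the clique to split rather than stay whole, add for each $c_j$ a large number $L$ of private pendant vertices in $I$ adjacent only to $c_j$ (or alternatively pad with a gadget making the unsplit clique too expensive). A pendant of $c_j$ lies at distance $3$ from any $x_i$ not adjacent to $c_j$ and within its component, so each element vertex $x_i$ can keep only edges towards one clique vertex $c_j$, or to several clique vertices, only if every pendant of those clique vertices reaches $x_i$ within two steps. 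Set the budget $k$ to the number of clique edges that must be deleted to partition $C$ appropriately, plus $\sum_i(\deg(x_i)-1)$.

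\textbf{Correctness.} For the forward direction, given an exact cover $\mathcal{S}'$, make each $c_j$ its own component (delete all clique edges, $\binom{m}{2}$). Attach each $x_i$ to the unique $c_j$ with $S_j\in\mathcal{S}'$ containing $u_i$ and delete its other edges. Each component is a star, which has diameter $2$. For the converse, argue that any solution of cost at most $k$ keeps each $x_i$ with exactly one neighbour (a counting argument: every $x_i$ loses at least $\deg(x_i)-1$ edges). The clique deletions are forced by the pendants: a component containing $c_j,c_{j'}$ with pendants on both and an $x_i$ can fail the diameter bound. The main obstacle is making the budget tight, so that the solution cannot save edges by keeping clique edges while deleting $x_i$ edges more cleverly. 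With all clique edges deleted, a cost count shows each $x_i$ keeps exactly one edge. The exact-cover condition must then come from a size constraint, for example requiring each kept $c_j$ to receive either $0$ or $3$ element vertices. I would enforce this by giving the budget no slack: each $x_i$ that survives attached needs one kept edge, and I would additionally penalise partially used triples. One way is to demand that each $c_j$ keeps degree constraints through weighted pendant gadgets adjacent to $c_j$ and to the elements of $S_j$, so that splitting a triple costs extra. Tuning this gadget so that the optimum cost is achieved exactly when the chosen triples are disjoint and covering is the step I expect to require the most care. If it proves awkward, I would instead reduce from the known NP-hardness of \textsc{Cluster Edge Deletion}-like variants on split graphs or from \textsc{3-Dimensional Matching}, reusing the same budget-tightness accounting.
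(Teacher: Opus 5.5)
\textbf{There is a genuine gap: your construction does not encode the exact-cover condition, so the converse direction cannot be proved.}

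Take your forward solution: delete all $\binom{m}{2}$ clique edges and keep exactly one edge at each $x_i$. This yields a disjoint union of stars of total cost $\binom{m}{2}+\sum_i(\deg(x_i)-1)$. It works for \emph{every} assignment of each $x_i$ to an arbitrary triple containing $u_i$, whether or not the chosen triples are disjoint. Hence every X3C instance in which each element lies in some triple is mapped to a yes-instance.

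Taking $L$ larger than the budget does not help. The pendants force each $c_j$ into its own component, so every solution is a star forest centred at the $c_j$. Its cost is additive over the $x_i$ and does not depend on how many element vertices each star receives. You acknowledge that the real constraint (each $c_j$ receives $0$ or $3$ elements) must come from an extra ``weighted pendant gadget''. But that gadget is the entire content of the reduction, and it is left unspecified. It is also unclear it can exist in your regime. In a split graph the independent vertices see only clique vertices, and each component here has a single clique vertex. A diameter-two condition around one centre does not naturally produce a non-monotone $\{0,3\}$ constraint. The fallback to \textsc{3-Dimensional Matching} ``with the same accounting'' inherits the same problem.

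\textbf{The missing idea:} do not shatter the clique side; force it to stay in one component. Then the diameter-two condition creates pairwise constraints among the independent vertices that remain attached. The paper reduces from \textsc{Clique} on $r$-regular graphs with budget $k'=r(n-k)$:
\begin{itemize}
\item The clique side consists of one subdivision vertex $e_{uv}$ per edge of the source graph. It is fully joined to an auxiliary clique $C'$ of size $k'+1$, so it cannot be disconnected within the budget.
\item The original vertices form the independent side.
\item Each original vertex either stays in the big component or is completely isolated, and isolation costs exactly $r$ by regularity.
\item Two kept original vertices must have a common neighbour. Such a neighbour exists iff they are adjacent in the source graph, so the kept vertices form a clique.
\item The budget $r(n-k)$ forces at least $k$ original vertices to be kept.
\end{itemize}
The selection is thus encoded by which independent vertices survive in one large component. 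That is the interaction your star-forest regime lacks.
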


\section{On Interval Graphs}

In this section, we present a polynomial kernel for \textsc{$2$-Club Cluster Edge Deletion} on interval graphs. 
The kernelization proceeds by applying a sequence of reduction rules that exploit structural properties of interval graphs to bound both the number and size of connected components.

\proofpara{Scope of the Reduction Rules} We first clarify the scope of the reduction rules developed in this section. 
Reduction Rules~7--10 apply to the general \textsc{$s$-Club Cluster Edge Deletion} problem on interval graphs and hold for all $s \ge 2$. 
These rules rely only on global diameter arguments, interval ordering properties, and domination relations, and are independent of the exact value of $s$.

In contrast, Reduction Rule~11 is specific to the case $s = 2$. 
Its correctness crucially exploits structural properties of $2$-clubs, most notably that every pair of vertices must have a common neighbor. 
This allows us to identify vertex pairs whose distance is forced to be at least $3$ under any feasible edge-deletion set, leading to both safe reductions and no-instance detection.

For larger values of $s$, such local common-neighbor arguments are no longer sufficient, as more complex distance interactions arise in interval graphs. 
Consequently, extending Reduction Rule~11 to arbitrary $s$ remains an open problem and appears to require new structural insights.
\medskip

\proofpara{Computational Status}
We also emphasize that, despite the kernelization results obtained in this section,
the computational complexity of \textsc{$2$-Club Cluster Edge Deletion} on interval graphs
is not yet fully understood. In particular, it remains open whether the problem is
NP-hard on interval graphs.

Furthermore, the difference between diameter one and diameter two is known to be
substantial even from a parameterized perspective. Misra, Panolan, and Saurabh
showed that a subexponential-time algorithmic phenomenon that holds for \textsc{Cluster Edge Deletion}
(the case $s=1$) do not extend to $s$-\textsc{Club Cluster Edge Deletion} once
$s \ge 2$. In addition, while \textsc{Cluster Edge Deletion} is polynomial-time
solvable on split graphs, \textsc{$2$-Club Cluster Edge Deletion} is already
NP-hard on the same graph class. Consequently, the polynomial-time solvability
of \textsc{Cluster Edge Deletion} on interval graphs \cite{konstantinidis_et_al:LIPIcs.MFCS.2019.12} should not be viewed as strong
evidence that \textsc{$2$-Club Cluster Edge Deletion} is polynomial-time solvable
on interval graphs.

Thus, the present work establishes polynomial kernelization under the parameter
$k$, but leaves open the fundamental question of whether the problem admits a
polynomial-time algorithm on interval graphs or is NP-complete even on this
restricted graph class.

\subsection{Overview of the Kernelization Strategy}

We describe the kernelization in a sequence of conceptually distinct stages.

\medskip
\noindent\textbf{Stage 1: Bounding diameter and number of components.}
We begin by applying Reduction Rule~7 exhaustively. This removes every connected
component $H$ with $\diam(H)\le s$, since such a component already forms a valid
$s$-club, and rejects the instance if some component has diameter exceeding
$(s+1)(k+1)$. Consequently, every remaining component has diameter bounded by
$O(k)$.

Next, we apply Reduction Rule~8, which ensures that the number of connected
components is at most $k$. Indeed, each remaining component requires at least
one edge deletion to reduce its diameter to at most $s$, and hence more than $k$
components would exceed the deletion budget.

\medskip
\noindent\textbf{Stage 2: Finding large clique or independent set in a component.}
It remains to bound the size of each connected component $H$. Suppose that
$H$ is large. Since interval graphs are perfect~\cite{984029},
we have $\chi(H)=\omega(H)$. Thus, in any proper coloring of $H$, one color
class contains at least $|V(H)|/\omega(H)$ vertices, implying that
\[
\alpha(H)\ge \frac{|V(H)|}{\omega(H)}.
\]
Hence,
$\alpha(H)\cdot\omega(H)\ge |V(H)|$,
and therefore
$\max\{\omega(H),\alpha(H)\}\ge \sqrt{|V(H)|}$.
Consequently, $H$ contains either a large independent set or a large clique.
We branch on these two possibilities.

\medskip
\noindent\textbf{Stage 3: Large independent set case.}
If $H$ contains a large independent set, then Lemma~7 implies that there exists
a vertex whose neighbors include a long consecutive block of vertices in the
interval ordering of the independent set. Reduction Rule~9 exploits this
structure by deleting a carefully chosen vertex from this block. The safety of
this rule follows from the fact that the remaining vertices in the block can
simulate the role of the deleted vertex in all relevant distance arguments.

\medskip
\noindent\textbf{Stage 4: Large clique case.}
If $H$ contains a large clique, then Lemma~9 extracts from it either a large
nested clique or a large staircase clique.

In the nested case, Reduction Rule~10 applies: a suitably chosen middle vertex
can be deleted because its neighborhood is dominated by surrounding vertices in
the nested structure.

In the staircase case, Reduction Rule~11 applies, which is specific to the case
$s=2$. Here, either one of two central vertices can be safely deleted, or the
rule correctly identifies a no-instance by exhibiting a pair of vertices that
cannot be kept within distance two without exceeding the deletion budget.

\medskip
\noindent\textbf{Stage 5: Termination and kernel size.}
Each of the above rules strictly reduces the size of the instance or rejects it.
Therefore, as long as some connected component exceeds the prescribed size
threshold, one of the reduction rules applies. Once no rule is applicable, every
remaining component has size bounded by a polynomial in $k$. Since there are at
most $k$ components this yields the
desired polynomial kernel.

\subsection{Kernelization Algorithm}

We now present the reduction rules formally.

\medskip
\noindent\textbf{Stage 1: Bounding diameter and number of components.} In this stage, we bound the diameter and number of connected components in the input graph. We begin with a reduction rule that helps us bound the diameter of each connected component.

\begin{red}\label{rr:small-large-components}
Let $H$ be any connected component of $G$.
\begin{itemize}
    \item If $\diam(H)\le s$, remove $H$ without changing $k$.
    \item If $\diam(H) > (s+1)(k+1)$, then return a no-instance.
\end{itemize}
\end{red}

\begin{lemma}
Reduction Rule~\ref{rr:small-large-components} is safe.
\end{lemma}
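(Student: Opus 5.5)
The plan is to treat the two bullets separately. Both rely on the fact that edge deletions act independently on distinct connected components: deleting edges never merges components, and every connected component of $G-F$ lies inside a single component of $G$.

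\emph{First bullet.} Suppose $H$ is a component with $\diam(H)\le s$. If $F$ is a solution for $(G,k)$, then $F\setminus E(H)$ is a solution for $(G-V(H),k)$. The components of $(G-V(H))-(F\setminus E(H))$ are exactly the components of $G-F$ that lie outside $H$, so each has diameter at most $s$, and $|F\setminus E(H)|\le k$. Conversely, let $F'$ be a solution for $(G-V(H),k)$. Then $F'$ is also a solution for $(G,k)$, since $H$ stays intact in $G-F'$ and is already an $s$-club. Hence $(G,k)$ and $(G-V(H),k)$ are equivalent.

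\emph{Second bullet.} We show that if $(G,k)$ is a yes-instance, then every component $H$ satisfies $\diam(H)\le (s+1)(k+1)-1$, which is stronger than needed. Fix a solution $F$ with $|F|\le k$, and let $F_H=F\cap E(H)$. Removing one edge from a graph increases its number of components by at most one, so $H-F_H$ has at most $|F_H|+1\le k+1$ components $C_1,\dots,C_t$, each of diameter at most $s$.

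Take $u,v\in V(H)$ and a shortest $u$--$v$ path $P$ in $H$. The main point is that a shortest path meets each $C_i$ in a short stretch. If $x,y\in V(C_i)$ both lie on $P$, then $\dist_H(x,y)\le \dist_{C_i}(x,y)\le s$. Since $P$ is a shortest path, the subpath of $P$ from $x$ to $y$ has length equal to $\dist_H(x,y)$, so at most $s$. Therefore the vertices of $P$ that belong to $C_i$ all lie within a segment of $P$ containing at most $s+1$ vertices, although $P$ may leave and re-enter $C_i$ inside that segment. The $C_i$ partition $V(H)$, so every vertex of $P$ lies in one of these segments. Hence $P$ has at most $t(s+1)\le (k+1)(s+1)$ vertices, and its length is at most $(s+1)(k+1)-1$.

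This gives $\diam(H)\le (s+1)(k+1)-1$ for every component of a yes-instance. By contraposition, a component with $\diam(H)>(s+1)(k+1)$ certifies a no-instance, so returning no is correct. The one delicate step is this segment-covering argument: the vertices of $C_i$ on $P$ need not be consecutive, so the bound must come from the first and last such vertex, not from assuming each component meets $P$ in a contiguous block.
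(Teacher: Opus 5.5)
Your proof is correct and follows essentially the same route as the paper. The first bullet uses the fact that edge deletions act componentwise, and the second applies pigeonhole to a long shortest path of $H$ against the at most $k+1$ components of $H-F$, each of which can contain path vertices only within a window of $s+1$ consecutive positions. Your version supplies the details the paper's sketch leaves implicit: the bound $|F_H|+1\le k+1$ on the number of components, and why two path vertices in one component must lie at distance at least $s+1$.
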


\noindent Now, we provide a reduction rule that bounds the number of connected components in the input graph.

\begin{red}\label{rr:cc-count}
If $G$ has more than $k$ connected components, then return a no-instance.
\end{red}

\begin{lemma}
Reduction Rule~\ref{rr:cc-count} is safe.
\end{lemma}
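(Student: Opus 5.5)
The plan is to argue that, after Reduction Rule~\ref{rr:small-large-components} has been applied exhaustively, every connected component of $G$ needs at least one deleted edge of its own; more than $k$ components then cannot be handled with budget $k$. Since the rule only returns a no-instance, safety amounts to showing that the instance is indeed a no-instance whenever the rule fires.

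First, I would record the invariant we rely on: Reduction Rule~\ref{rr:small-large-components} has been applied exhaustively before Reduction Rule~\ref{rr:cc-count}. Hence every remaining component $H$ satisfies $\diam(H) > s$. Next, I would take an arbitrary solution $F \subseteq E(G)$ with $|F| \le k$ such that every component of $G - F$ has diameter at most $s$. The aim is to show that $F$ contains at least one edge of $E(H)$ for every component $H$ of $G$.

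The key step is a locality argument. Edges of $F$ lying in other components do not affect $H$, because deleting edges never merges components. So if $F \cap E(H) = \emptyset$, then $H$ itself is a connected component of $G - F$. Its diameter exceeds $s$, which contradicts feasibility of $F$.

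Since the edge sets of distinct components are disjoint, we get $|F| \ge \sum_H |F \cap E(H)| \ge$ (number of components) $> k$. This contradicts $|F| \le k$, so the answer no-instance is correct. I do not expect a real obstacle here. The only point needing care is stating explicitly that the rule is applied only after Reduction Rule~\ref{rr:small-large-components} has made every component have diameter greater than $s$. Without that precondition, components that are already $s$-clubs would cost nothing, and the rule would be unsound.
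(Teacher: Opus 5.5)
Your proposal is correct and follows essentially the same argument as the paper. It uses the precondition that, after Reduction Rule~\ref{rr:small-large-components}, every remaining component has diameter greater than $s$, so each component must contain an edge of $F$; disjointness of the components' edge sets then forces $|F|>k$. Your locality step (a component $H$ with $F\cap E(H)=\emptyset$ survives intact as a component of $G-F$) spells out explicitly what the paper states in one line.
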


\noindent At this point, Stage~1 is complete. We have removed all trivial components and ensured that every remaining connected component $H$ satisfies
$s < \diam(H) \le (s+1)(k+1)$,
and that the total number of connected components is at most $k$.

\medskip
\noindent\textbf{Stage 2: Finding large clique or independent set in a component.} The following lemma shows that any sufficiently large component contains either a large independent set or a large clique, and such a structure can be found in polynomial time.

\begin{lemma}\label{lem:large-structure}
Let $H$ be a connected component of an interval graph. 
If  $|V(H)| > (6k+7)^4$, then $H$ contains either an independent set or a clique of size at least $(6k+7)^2$.
Moreover, such a structure can be found in polynomial time using the interval
representation of $H$.
\end{lemma}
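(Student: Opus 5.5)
We follow the perfection argument sketched in Stage~2. Set $N=|V(H)|>(6k+7)^4$. Since $H$ is an interval graph it is perfect, so $\chi(H)=\omega(H)$. Fix an optimal proper colouring of $H$ with $\omega(H)$ colours. Some colour class is an independent set of size at least $N/\omega(H)$, which gives $\alpha(H)\,\omega(H)\ge N$. If $\omega(H)<(6k+7)^2$, then
\[
\alpha(H)\ \ge\ \frac{N}{\omega(H)}\ >\ \frac{(6k+7)^4}{(6k+7)^2}\ =\ (6k+7)^2 .
\]
Otherwise $\omega(H)\ge(6k+7)^2$ directly. In either case one of the two structures has size at least $(6k+7)^2$, which proves the existence part. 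Connectivity of $H$ plays no role here.

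For the algorithmic part we avoid invoking general perfect-graph colouring and work directly with the interval representation $\{[\ell_v,r_v]\}$. For the clique side, a maximum clique is obtained by a sweep over the $2N$ sorted endpoints. At each left endpoint we record the set of currently open intervals. By the Helly property for intervals, the largest such set is a maximum clique, and every such set is a clique because all its intervals contain the current point. For the independent side, a maximum independent set is found greedily: repeatedly choose the interval with the smallest right endpoint, then discard every interval that intersects it. The standard exchange argument shows this is optimal. Both procedures run in $O(N\log N)$ time.

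To finish, compute both the clique and the independent set and output the larger one. By the inequality $\alpha(H)\omega(H)\ge N$ established above, the larger of the two has size at least $\sqrt N>(6k+7)^2$. If one prefers not to rely on the equality $\chi=\omega$ as a black box, there is an explicit colouring: greedily colour the intervals in order of left endpoint, each with the smallest colour free among the intervals still open. This uses exactly $\omega(H)$ colours, and its colour classes give the bound constructively.

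The only point needing care is strictness. We need a structure of size \emph{at least} $(6k+7)^2$, and the hypothesis $N>(6k+7)^4$ gives $\max\{\alpha,\omega\}\ge\sqrt N>(6k+7)^2$, which is enough. Beyond that there is no real obstacle; the lemma is essentially the standard Erd\H{o}s--Szekeres-type bound for perfect graphs, made effective by the interval model.
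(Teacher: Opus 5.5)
Your proof is correct and takes the same route as the paper: it also uses perfection ($\chi(H)=\omega(H)$) to get $\alpha(H)\,\omega(H)\ge |V(H)|$, and hence $\max\{\alpha(H),\omega(H)\}\ge\sqrt{|V(H)|}>(6k+7)^2$. Your sweep-line maximum-clique computation and greedy maximum-independent-set computation spell out the polynomial-time claim, which the paper only asserts.
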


\begin{proof}
It is well known that for interval graphs,
$\max\{\omega(H), \alpha(H)\} \ge \sqrt{|V(H)|}$.
Thus, if $|V(H)| > (6k+7)^4$, we obtain
$\max\{\omega(H), \alpha(H)\} > (6k+7)^2$,
which implies that $H$ contains either a clique or an independent set of size at least $(6k+7)^2$.
\end{proof}

\noindent This completes Stage~2. We can now assume that every remaining component $H$ with $|V(H)|>(6k+7)^4$ contains either a large independent set or a large clique, which will be handled in the next stage.

\medskip
\noindent\textbf{Stage 3: Large independent set case.} We first consider the case when $H$ contains a large independent set. 
The following lemma shows that such a structure contains a long consecutive block in the interval ordering, which can be used to safely apply a reduction.

\begin{lemma}\label{lem:IS-long-consecutive-neighbour}
Let $H$ be a connected interval graph with $\diam(H) \leq (s+1)(k+1)$, and let 
$I = \{v_1,\ldots,v_{|I|}\} \subseteq V(H)$ be an independent set, where the vertices are ordered by nondecreasing left endpoints.
If $|I| \ge (s+1)(k+1)(6k+7)$, then there exists a vertex $u \in V(H)$ that is adjacent to at least $6k+7$ vertices of $I$, and these neighbours form a consecutive block
$v_a, v_{a+1}, \ldots, v_{a+6k+6}$ in the ordering of $I$.
\end{lemma}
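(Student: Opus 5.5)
Since $I$ is independent, the intervals $I_{v_1},\dots,I_{v_{|I|}}$ are pairwise disjoint. Ordering them by left endpoint therefore also orders them by right endpoint, so $r_{v_i}<\ell_{v_{i+1}}$ for all $i$. I will first cover the gaps between consecutive intervals of $I$ by intervals of a shortest path, and then use pigeonhole.

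First, fix a shortest path $P=(p_0,p_1,\dots,p_d)$ in $H$ from $v_1$ to $v_{|I|}$. By hypothesis $d\le \diam(H)\le (s+1)(k+1)$. The union of the intervals of $P$ is connected, since consecutive intervals intersect, so it covers the segment $[r_{v_1},\ell_{v_{|I|}}]$. Consider an intermediate vertex $v_i$ with $1<i<|I|$. Its interval lies strictly between $I_{v_1}$ and $I_{v_{|I|}}$, so some internal vertex $p_j$ with $1\le j\le d-1$ covers the point $\ell_{v_i}$. The endpoints $p_0=v_1$ and $p_d=v_{|I|}$ cannot cover it, because they are disjoint from $I_{v_i}$. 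Hence every $v_i$ with $1<i<|I|$ is adjacent to at least one of the at most $d-1<(s+1)(k+1)$ internal path vertices.

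Next I establish consecutiveness. For any vertex $u$, the set $N(u)\cap I$ is a consecutive block in the ordering of $I$. To see this, suppose $u$ meets $I_{v_a}$ and $I_{v_c}$ with $a<b<c$. Then $I_u$ contains the segment from a point of $I_{v_a}$ to a point of $I_{v_c}$. Because $I_{v_b}$ lies entirely between these two intervals, $I_{v_b}\subseteq$ that segment, and so $u$ is adjacent to $v_b$. So each internal path vertex $p_j$ dominates a contiguous block $B_j$ of $I$, and these at most $d-1$ blocks cover $v_2,\dots,v_{|I|-1}$.

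Finally, I apply pigeonhole. Suppose every block had size at most $6k+6$. Then the blocks would cover at most $(d-1)(6k+6)$ vertices of $I$, so
\[
|I|-2\le \bigl((s+1)(k+1)-1\bigr)(6k+6).
\]
Since $|I|\ge (s+1)(k+1)(6k+7)$, this gives $|I|-2\ge (s+1)(k+1)(6k+7)-2>(s+1)(k+1)(6k+6)$, using $(s+1)(k+1)\ge 2$. That is a contradiction. Hence some $u=p_j$ is adjacent to at least $6k+7$ vertices of $I$, and these form a consecutive block, any $6k+7$ consecutive members of which give $v_a,\dots,v_{a+6k+6}$. The only slightly delicate step is the covering argument: one must make sure the gap points are hit by internal path vertices, which follows from connectivity of the union of path intervals. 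Everything is constructive from the interval representation, so $u$ can be found in polynomial time.
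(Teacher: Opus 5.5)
Your argument is correct and shares the paper's skeleton: a short path whose intervals sweep across $I$, pigeonhole over its vertices, and the fact that $N(u)\cap I$ is a consecutive block when $I$ is independent. However, your choice of path and your counting differ from the paper's in a way that matters.

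The paper takes a dominating shortest path $P$ between the extreme intervals of $H$ and assigns each vertex of $I$ to a neighbour on $P$. It then bounds $|I|/|V(P)|$ using $|V(P)|\le (s+1)(k+1)$. Since $|V(P)|=d+1$ may equal $(s+1)(k+1)+1$, that step is off by one: for $s=2$, $k=1$ it only gives $\lceil 78/7\rceil=12<13$.

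You instead run the path from $v_1$ to $v_{|I|}$ and charge only the $|I|-2$ interior vertices of $I$ to the at most $(s+1)(k+1)-1$ internal path vertices. This gives $\bigl((s+1)(k+1)-1\bigr)(6k+6)<(s+1)(k+1)(6k+7)-2$, which closes the gap under the stated hypothesis. The paper's version dominates all of $V(H)$, but the lemma does not need that.

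One step needs an extra line. An internal $p_j$ whose interval contains $\ell_{v_i}$ is a neighbour of $v_i$ only if $p_j\neq v_i$. If $v_i=p_j$ lies on $P$, then $j\ge 2$ because $p_0=v_1$ is not adjacent to $v_i$. So $p_{j-1}$ is an internal neighbour of $v_i$, and your covering claim still holds.
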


\noindent We now use the structure identified in Lemma~\ref{lem:IS-long-consecutive-neighbour} to apply a reduction rule. 

\begin{red} \label{rr:IS-neighbour-irrelevant}
Let $H$ be a connected component of an interval graph, and let
$I=\{v_1,\ldots,v_{|I|}\}\subseteq V(H)$ be an independent set ordered by nondecreasing left endpoints.
Suppose there exists a vertex $u\in V(H)$ that is adjacent to a block of $6k+7$ consecutive vertices
$v_a, v_{a+1}, \ldots, v_{a+6k+6}$ of $I$.
Delete the vertex $v_{a+2k+2}$ and keep $k$ unchanged.
\end{red}

The following structural properties will be used in the proof of
Reduction Rule~\ref{rr:IS-neighbour-irrelevant}.

\begin{observation}[Path-interval union] Let \(P=(p_0,p_1,\ldots,p_\ell)\) be a path in a subgraph of an interval graph \(G\), under a fixed interval representation of \(G\). Then, the union of intervals $\bigcup_{j=0}^{\ell} I_{p_j}$  is a connected interval of the real line. \end{observation}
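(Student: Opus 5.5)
The plan is an induction on the length $\ell$ of the path $P$, using the elementary fact that the union of two closed intervals of the real line that intersect is again a closed interval. Since $P$ is a path in a subgraph of $G$, every edge $p_{j}p_{j+1}$ of $P$ is an edge of $G$. By the definition of the interval representation, this gives $I_{p_j}\cap I_{p_{j+1}}\neq\emptyset$ for every $0\le j<\ell$. Passing to a subgraph only removes edges, so no adjacency used along $P$ is ever lost. This is the only place where the subgraph hypothesis matters.

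For the base case $\ell=0$, the union is the single interval $I_{p_0}=[\ell_{p_0},r_{p_0}]$, which is connected. For the inductive step, assume $U_{j}=\bigcup_{i=0}^{j} I_{p_i}$ is a closed interval $[a,b]$. Then $I_{p_{j+1}}=[c,d]$ meets $I_{p_j}\subseteq U_j$, so $[c,d]\cap[a,b]\neq\emptyset$. Two intersecting intervals satisfy $c\le b$ and $a\le d$, so their union is exactly $[\min(a,c),\max(b,d)]$, again an interval. An alternative route avoiding explicit endpoints is the topological fact that the union of two connected sets with nonempty intersection is connected, together with the fact that connected subsets of $\mathbb{R}$ are intervals.

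I do not expect any real obstacle. The only point needing care is to state the two-interval union fact explicitly, either via endpoints or via connectedness, and to note that the intersection is witnessed through $I_{p_j}$, which is contained in the accumulated union $U_j$. After $\ell$ steps this gives that $\bigcup_{j=0}^{\ell}I_{p_j}=U_\ell$ is a connected interval of the real line.
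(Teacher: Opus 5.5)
Your proof is correct and follows the same idea the paper relies on (the paper states it without a separate proof and uses it inline in the proof of Observation~\ref{obs:interval-intersection-cross-edges}): consecutive vertices of the path are adjacent in $G$, so their intervals intersect, and chaining pairwise-intersecting intervals gives a connected union. Your induction makes explicit the two-interval union step that the paper leaves implicit.
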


\begin{observation}\label{obs:interval-intersection-cross-edges} Let $G$ be an interval graph with fixed interval representation $\{I_v=[\ell_v,r_v] : v\in V(G)\}$. Let $F$ be an edge-deletion set, and let $C_1$ and $C_2$ be two distinct connected components of $G-F$. For $i\in\{1,2\}$, let  $J(C_i):=\bigcup_{w\in C_i} I_w $.  Then each $J(C_i)$ is a connected interval. Let  $J:=J(C_1)\cap J(C_2)$  and define  $X:=\{\,v\in V(G): I_v\cap J\neq\emptyset\,\}$.  Then every vertex $v\in X$ is incident to at least one edge of $F$ whose endpoints lie in two distinct connected components of $G-F$. \end{observation}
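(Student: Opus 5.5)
The plan is to prove the two claims separately. The first follows from the Path-interval union observation. The second comes from a direct pointwise argument: pick a point of $I_v\cap J$, find vertices of $C_1$ and $C_2$ whose intervals contain that point, and use the interval model.

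\emph{Step 1: each $J(C_i)$ is an interval.} Fix $i$ and a vertex $w_0\in C_i$. Since $C_i$ is connected in $G-F$, for every $w\in C_i$ there is a path $P_w$ from $w_0$ to $w$ inside $C_i$. By the Path-interval union observation, applied to the subgraph $G-F$ of $G$, the set $U_w:=\bigcup_{p\in P_w} I_p$ is a connected interval. It contains $I_{w_0}$ and $I_w$. Hence
\[
J(C_i)=\bigcup_{w\in C_i} U_w ,
\]
and this is a union of connected intervals of the line that all contain the nonempty set $I_{w_0}$. Such a union is connected, so $J(C_i)$ is an interval. Consequently $J=J(C_1)\cap J(C_2)$ is also an interval, possibly empty. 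If $J=\emptyset$ then $X=\emptyset$ and there is nothing more to prove.

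\emph{Step 2: vertices of $X$ are incident to cross edges of $F$.} Let $v\in X$ and choose a point $x\in I_v\cap J$. Since $x\in J(C_1)$, there is $w_1\in C_1$ with $x\in I_{w_1}$. Likewise, there is $w_2\in C_2$ with $x\in I_{w_2}$. Because $C_1\neq C_2$ are distinct components, $v$ lies in at most one of them. Choose $j\in\{1,2\}$ with $v\notin C_j$; then $v\neq w_j$. Since $x\in I_v\cap I_{w_j}$, the intervals intersect, so $vw_j\in E(G)$. On the other hand, $v$ and $w_j$ lie in different connected components of $G-F$, so $vw_j\notin E(G-F)$. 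Therefore $vw_j\in F$. This is an edge of $F$ incident to $v$ whose endpoints lie in two distinct components of $G-F$, which proves the claim.

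The only slightly delicate point is Step~1. One must ensure the union over all of $C_i$ is connected, not just the union along a single path. This is handled by rooting all paths at the common vertex $w_0$, so that every $U_w$ meets $I_{w_0}$. Step~2 is then routine; the only care needed is choosing $j$ so that $v\neq w_j$, which is possible because $v$ cannot belong to both $C_1$ and $C_2$.
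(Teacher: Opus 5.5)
Your proof is correct and follows essentially the same route as the paper. Connectivity of $J(C_i)$ comes from paths in $G-F$ whose consecutive intervals overlap. Then, for $v\in X$, you take the edge $vw_j$ to a vertex $w_j$ of whichever of $C_1,C_2$ does not contain $v$, and this edge must lie in $F$. Rooting all paths at a common vertex $w_0$, and fixing a single point $x\in I_v\cap J$, make explicit two details that the paper's proof leaves implicit.
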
 

\noindent Using Observation~\ref{obs:interval-intersection-cross-edges}, we now show that any component intersecting $I_v$ must lie in a restricted region.

\begin{Claim}\label{claim:middle-interval-squeezed}
Let $v := v_{a+2k+2}$. Let $x \in V(G)$ be a vertex such that
$I_x \cap I_v \neq \emptyset$, and let $C_x$ be the connected component of
$G-F$ containing $x$. Suppose that $C_x \neq C_u$. Then
$J(C_x) \subset \bigl(r_{v_{a+k+1}},\, \ell_{v_{a+3k+5}}\bigr)$.
Equivalently, every vertex in $C_x$ has its interval strictly contained between
$r_{v_{a+k+1}}$ and $\ell_{v_{a+3k+5}}$.
\end{Claim}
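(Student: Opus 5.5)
Here $F$ is a solution, i.e.\ an edge-deletion set with $|F|\le k$ such that every component of $G-F$ is an $s$-club; $C_u$ is the component of $G-F$ containing $u$. The plan is a counting argument built on Observation~\ref{obs:interval-intersection-cross-edges}: if $J(C_x)$ reaches too far left or right, it overlaps $J(C_u)$ on a long stretch. That stretch meets many pairwise disjoint intervals of $I$. Each such vertex must carry a cross-component edge of $F$, and these edges are distinct, which exceeds the budget $k$.

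\emph{Step 1: the geometry.} Since $u$ is adjacent to every vertex of the block $v_a,\dots,v_{a+6k+6}$, the interval $I_u$ meets all of them. The block vertices are pairwise disjoint and sorted by left endpoint, so
\[
r_{v_a}<\ell_{v_{a+1}}\le r_{v_{a+1}}<\cdots<\ell_{v_{a+6k+6}},
\]
and hence $I_u\supseteq[r_{v_{a+1}},\ell_{v_{a+6k+5}}]$; every middle block interval lies inside $I_u$. Also $J(C_u)\supseteq I_u$. The set $J(C_x)$ is a connected interval (Path-interval union) that contains $I_x$, and $I_x$ meets $I_v$; so $J(C_x)$ meets $I_v$, which lies strictly inside $I_u$.

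\emph{Step 2: suppose $J(C_x)$ extends left past $r_{v_{a+k+1}}$.} That is, $J(C_x)$ contains a point $\le r_{v_{a+k+1}}$. Since it is connected and also meets $I_v=I_{v_{a+2k+2}}$, it contains the whole segment from $r_{v_{a+k+1}}$ to $\ell_{v_{a+2k+2}}$. This segment lies in $I_u\subseteq J(C_u)$, so
\[
J:=J(C_x)\cap J(C_u)\supseteq[r_{v_{a+k+1}},\ell_{v_{a+2k+2}}].
\]
This $J$ meets the intervals of $v_{a+k+1},v_{a+k+2},\dots,v_{a+2k+2}$, which are $k+2$ vertices. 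By Observation~\ref{obs:interval-intersection-cross-edges}, applied with $C_1=C_x$ and $C_2=C_u$ (distinct by hypothesis), each of these vertices is incident to an edge of $F$ joining two different components. These vertices are pairwise nonadjacent, since $I$ is independent, so no edge is incident to two of them. Their $F$-edges are therefore distinct, giving $|F|\ge k+2>k$, a contradiction.

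\emph{Step 3: the right side.} Symmetrically, suppose $J(C_x)$ contains a point $\ge\ell_{v_{a+3k+5}}$. Then $J\supseteq[r_{v_{a+2k+2}},\ell_{v_{a+3k+5}}]$, which lies within $I_u$ because $a+3k+5\le a+6k+5$. This $J$ meets $v_{a+2k+2},\dots,v_{a+3k+5}$, which are $k+4$ vertices, again a contradiction. Combining the two sides gives $J(C_x)\subset(r_{v_{a+k+1}},\ell_{v_{a+3k+5}})$ with strict containment, and the ``equivalently'' form follows at once since $J(C_x)$ is the union of the intervals of $C_x$.

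\emph{Main obstacle: endpoint bookkeeping.} Two points need checking.
\begin{itemize}
\item The boundary vertices $v_{a+k+1}$ and $v_{a+3k+5}$ must genuinely meet $J$, even when $J(C_x)$ only touches their endpoints. Here the closed intervals help, and in any case we have slack in the count.
\item The segments must lie within $J(C_u)$. This uses only $I_u\subseteq J(C_u)$ together with the fact that the relevant block indices lie strictly between $a$ and $a+6k+6$.
\end{itemize}
If the margins were tight, I would shrink the count to $k+1$ vertices, which still suffices.
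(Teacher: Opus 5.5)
Your proof is correct and follows the paper's argument. You suppose $J(C_x)$ reaches past $r_{v_{a+k+1}}$ (resp.\ $\ell_{v_{a+3k+5}}$), deduce that $J(C_x)\cap J(C_u)$ meets more than $k$ pairwise disjoint intervals of the block, and then use Observation~\ref{obs:interval-intersection-cross-edges} together with the independence of $I$ to force more than $k$ distinct edges in $F$. You also justify explicitly that the whole segment $[r_{v_{a+k+1}},\ell_{v_{a+2k+2}}]$ lies in $J(C_x)$ (by connectivity, since $J(C_x)$ meets $I_v$) and in $I_u\subseteq J(C_u)$, which fills in a step the paper only asserts.
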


\noindent Using the structural restriction from Claim~\ref{claim:middle-interval-squeezed},
we now show that such components can be safely merged with $C_u$ without
violating the $s$-club property.

\begin{Claim}\label{claim:merge-clubs} Let \(F\) be an inclusion-minimal feasible edge-deletion set such that \(G-F\) is a cluster of \(s\)-clubs. Let \(v:=v_{a+2k+2}\), and let \(C_u\) denote the connected component of \(G-F\) containing \(u\). Let \(x\in V(G)\) be a vertex such that \(I_x\cap I_v\neq\emptyset\), and let \(C_x\) be the connected component of \(G-F\) containing \(x\). If \(C_x\neq C_u\), then the graph obtained from \(G-F\) by restoring all edges between \(C_u\) and \(C_x\) has \(C_u\cup C_x\) as an \(s\)-club. \end{Claim}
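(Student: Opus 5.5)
The plan is to first show that every vertex of $C_x$ is already adjacent to $u$ in $G$, and then bound distances in the merged graph by rerouting shortest paths of $C_u$ through vertices whose intervals cover $I_y$.

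\textbf{Step 1: $C_x$ sits inside $I_u$.} Since $I$ is independent and ordered by left endpoints, right endpoints are ordered as well. Because $u$ is adjacent to $v_a$ and to $v_{a+6k+6}$,
\[
\ell_u\le r_{v_a}\le r_{v_{a+k+1}}
\quad\text{and}\quad
r_u\ge \ell_{v_{a+6k+6}}\ge \ell_{v_{a+3k+5}}.
\]
By Claim~\ref{claim:middle-interval-squeezed}, $J(C_x)\subset\bigl(r_{v_{a+k+1}},\ell_{v_{a+3k+5}}\bigr)\subseteq I_u$. Hence $I_y\subseteq I_u$ for every $y\in C_x$. So $uy\in E(G)$, and restoring the edges between $C_u$ and $C_x$ makes $u$ adjacent to all of $C_x$.

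\textbf{Step 2: the easy distances.} In the merged graph $M$, edges are only added, so pairs inside $C_u$ stay within distance $s$. Pairs $y,y'\in C_x$ have the common neighbour $u$, so they are within distance $2\le s$.

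\textbf{Step 3: cross pairs $y\in C_x$, $w\in C_u$.} Take a shortest path $P=(u=p_0,p_1,\ldots,p_t=w)$ in $C_u$ with $t\le s$.
\begin{itemize}
\item If $t<s$, the route $y,u,\ldots,w$ has length at most $s$.
\item If $t=s$ and some $p_j$ with $j\ge1$ satisfies $I_{p_j}\cap I_y\neq\emptyset$, then $yp_j$ is a restored edge. This gives $\dist_M(y,w)\le 1+(t-j)\le s$.
\end{itemize}

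\textbf{Step 4: the main obstacle.} It remains to treat $t=s$ with no $p_j$ ($j\ge1$) meeting $I_y$. By the path-interval union observation, $\bigcup_{j\ge1}I_{p_j}$ is then a connected interval lying entirely to one side of $I_y$, say the right. It must still reach $I_u$ through $p_1$.

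I will use the anchors $v_{a+k+1},\ldots,v_{a+3k+5}$ on that side. There are at least $2k+4$ block vertices strictly between $I_y$ and the far end, but $F$ has at most $k$ edges. So at most $k$ of these $v_i$ lose their edge to $u$, and the others lie in $C_u$ as neighbours of $u$.

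The path to $w$ must pass over the region occupied by these $v_i$. I will then pick an intermediate vertex $p_j$ whose interval covers the left end of the path union near $I_y$. I expect the argument that some such $p_j$ with $j\ge1$ actually meets $I_y$ to be the delicate step, since the interval containments must be checked carefully.

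If the direct containment argument fails, the fallback is inclusion-minimality of $F$. Suppose $\dist_M(y,w)>s$. I would then exhibit an edge of $F$ whose reinsertion keeps all components $s$-clubs, contradicting minimality. The natural candidate is an edge from $y$ to a vertex of $C_u$ guaranteed by Observation~\ref{obs:interval-intersection-cross-edges}, since $y\in X$ because $J(C_x)\subseteq J(C_u)$.
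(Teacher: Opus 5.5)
Your Steps 1--3 are correct and agree with the paper. In particular, $J(C_x)\subseteq I_u$, so after restoration $u$ is adjacent to all of $C_x$, and the cases $t<s$ and ``some $p_j$ with $j\ge 1$ meets $I_y$'' are handled correctly.

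\textbf{The gap is Step~4, which is the real content of the claim.} You leave it unresolved, and neither of your two plans can close it.

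A shortest $u$--$w$ path is the wrong object. Since $I_y\subseteq I_u$, the path can leave $u$ through a $p_1$ that meets $I_u$ only on the side where $w$ lies. After that, nothing forces any $p_j$ with $j\ge 1$ to meet $I_y$. Concretely, take $I_u=[0,10]$, $I_y=[4,5]$, $I_{p_1}=[8,12]$, $I_w=[11.5,13]$ and $s=2$.

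Your anchors $v_{a+k+1},\ldots,v_{a+3k+5}$ also do not help. The middle ones may overlap $J(C_x)$, so they do not lie on a fixed side of $I_y$. Moreover, ``the path to $w$ passes over their region'' is unjustified for a path that starts at $u$.

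The minimality fallback cannot produce a contradiction either. If $\dist_M(y,w)>s$, where $M$ restores all $C_u$--$C_x$ edges, then restoring any single such edge leaves $\dist(y,w)>s$. So reinserting any $F$-edge between $C_u$ and $C_x$ is infeasible, which is exactly what inclusion-minimality predicts.

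\textbf{The missing idea is to replace $u$ by an anchor lying on the far side of $I_y$ from $w$.}

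1. In the case $t=s\ge 2$, inclusion-minimality means no edge of $F$ lies inside $C_u$. Hence $uw\notin E(G)$, so $I_w$ lies entirely to one side of $I_u$, say the right.

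2. The $k+1$ vertices $v_a,\ldots,v_{a+k}$ are all adjacent to $u$. So some $v_i$ among them lies in $C_u$; otherwise $F$ would contain $k+1$ distinct edges at $u$.

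3. By Claim~\ref{claim:middle-interval-squeezed}, $r_{v_i}<\ell_{v_{a+k+1}}\le r_{v_{a+k+1}}<\ell_y$, so $I_{v_i}$ lies strictly left of $I_y$.

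4. Take a shortest $w$--$v_i$ path $(w=p_0,\ldots,p_\ell=v_i)$ in $G-F$, with $\ell\le s$ because $C_u$ is an $s$-club. The union of its intervals is connected and stretches from $I_{v_i}$, left of $I_y$, to $I_w$, right of $I_u\supseteq I_y$. Hence some $p_j$ meets $I_y$, and $j<\ell$ since $I_{v_i}\cap I_y=\emptyset$.

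5. The edge $p_jy$ runs between $C_u$ and $C_x$, so it is restored. This gives $\dist_M(w,y)\le j+1\le \ell\le s$.

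If $I_w$ lies to the left of $I_u$, use $v_{a+5k+6},\ldots,v_{a+6k+6}$ symmetrically. This is how the paper concludes.

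The length saving comes from the crossing vertex not being the endpoint $v_i$. Your $u$-based route cannot guarantee this.
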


We are now ready to prove the safeness of Reduction Rule~\ref{rr:IS-neighbour-irrelevant}.

\begin{lemma}\label{lem:IS-neighbour-irrelevant}
    The Reduction Rule~\ref{rr:IS-neighbour-irrelevant} is safe.
\end{lemma}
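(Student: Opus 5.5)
We have to show that $(G,k)$ is a yes-instance if and only if $(G-v,k)$ is one, where $v:=v_{a+2k+2}$. The plan is to prove the two directions separately. The backward direction will use Claims~\ref{claim:middle-interval-squeezed} and~\ref{claim:merge-clubs}. The forward direction will use the redundancy of the block $v_a,\ldots,v_{a+6k+6}$ around $u$.

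\emph{From $G$ to $G-v$.} Let $F$ be a solution for $G$ with $|F|\le k$. Take $F':=F\setminus\{e: v\in e\}$, so $|F'|\le k$; we claim $G-v-F'$ is a cluster of $s$-clubs. Removing $v$ can only raise distances inside the component $C_v$ of $G-F$ containing $v$. So we must show that no shortest path inside $C_v$ needs $v$ as an internal vertex.
\begin{itemize}
\item At most $2k$ of the vertices $v_a,\ldots,v_{a+6k+6}$ are incident to edges of $F$. So plenty of block vertices keep their edge to $u$ and lie in $C_u$.
\item If $v$ is a cut vertex on some path, it bridges two neighbours $y,z$ with $I_y\cap I_v\neq\emptyset\neq I_z\cap I_v$.
\item By Observation~\ref{obs:interval-intersection-cross-edges} and the path-interval-union observation, such a bridge can be rerouted through $u$, since $I_u$ covers $I_v$'s neighbourhood region. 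Concretely, $I_u\supseteq[\ell_{v_{a+1}},r_{v_{a+6k+5}}]$, so $u$ is adjacent to every vertex whose interval meets $I_v$.
\end{itemize}
Thus in $G$ we have $N(v)\subseteq N[u]$, so $v$ is dominated by $u$. We may then modify $F$ so that $v\in C_u$ and all of $v$'s edges to vertices outside $C_u$ are in $F$; Claim~\ref{claim:merge-clubs} allows this without raising the budget. Hence $v$ is a simplicial-like vertex whose removal does not increase any distance, because any path $y\,v\,z$ can be replaced by $y\,u\,z$.

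\emph{From $G-v$ to $G$.} Let $F'$ be a solution for $G-v$. Add $v$ back, keep its edge to $u$, and put all other edges from $v$ into the new set $F$. This is too expensive. Instead:
\begin{itemize}
\item Use Claim~\ref{claim:middle-interval-squeezed}. Every component $C_x\neq C_u$ of $G-v-F'$ whose vertex meets $I_v$ is squeezed strictly inside $(r_{v_{a+k+1}},\ell_{v_{a+3k+5}})$.
\item By Claim~\ref{claim:merge-clubs}, with $F'$ taken inclusion-minimal, restoring all edges between $C_u$ and $C_x$ keeps an $s$-club while decreasing $|F'|$. We iterate this until every neighbour of $v$ lies in $C_u$.
\item Now insert $v$ with all its edges. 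Every neighbour $y$ of $v$ lies in $C_u$ and is adjacent to $u$, and $v$ is adjacent to $u$. So for any $w\in C_u$ we get $\dist(v,w)\le 1+\dist(u,w)$. The refined bound uses $N(v)\subseteq N(u)$: any path from $w$ to $v$ can end via $u$, and we need $\dist(u,w)\le s-1$ or a neighbour $y$ of $v$ with $\dist(y,w)\le s-1$.
\end{itemize}

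\emph{Main obstacle.} The hardest step is this last distance bound. If some $w\in C_u$ has $\dist(u,w)=s$ and the shortest path leaves through a non-neighbour of $v$, it fails. To handle it, I will use the twin-like vertices $v_{a+2k+1}$ or $v_{a+2k+3}$. At least one of them keeps all its edges in $F'$ (budget $k$ versus $2k+1$ candidates on each side). Any vertex $w$ at distance $\le s$ from such a twin is at distance $\le s$ from $v$ by the interval betweenness of $v$. Specifically, a shortest path from $w$ to $v_{a+2k+3}$ enters $I_{v_{a+2k+3}}$ from the left or right. From the left, its interval union crosses $I_v$, so its penultimate vertex meets $I_v$. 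From the right, we use $v_{a+2k+1}$ symmetrically. This path-interval-union argument is what I will work out carefully.
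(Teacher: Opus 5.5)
Your proposal is correct in substance and is essentially the paper's proof. Forward, you use Claim~\ref{claim:merge-clubs} to bring $v$ into $C_u$ and then reroute every $y\,v\,z$ through $u$ via $N(v)\subseteq N[u]$. Backward, you absorb all components meeting $I_v$ into $C_u$ using Claims~\ref{claim:middle-interval-squeezed} and~\ref{claim:merge-clubs}, reinsert $v$ with all its edges, and bound $\dist(w,v)$ by the path-interval-union argument against a block vertex of $C_u$ on the other side of $I_v$. The paper instead splits on whether $\dist(p,u)\le s-1$ and uses a vertex of $\{v_a,\ldots,v_{a+k}\}$ or $\{v_{a+5k+6},\ldots,v_{a+6k+6}\}$, which is the same idea.

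A few slips need fixing when you write it out:
\begin{itemize}
\item Take $F$ inclusion-minimal already in the forward direction. Both the reroute and Claim~\ref{claim:merge-clubs} need that no edge of $F$ lies inside a component.
\item Split on which side of $I_v$ the interval $I_w$ lies, not on how the path enters the twin.
\item Choose the twin from the at least $k+1$ block vertices on the opposite side. A fixed $v_{a+2k+1}$ or $v_{a+2k+3}$ need not lie in $C_u$.
\item Conclude from some $p_j$ with $j<\ell$ whose interval meets $I_v$, not from the penultimate vertex.
\end{itemize}
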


\medskip
\noindent \textbf{Stage 4: Large clique case.}
We now conclude Stage~3. In the next stage, we consider the case where the
graph contains a sufficiently large clique and derive the corresponding
reduction rules.
We first introduce two canonical substructures of cliques in interval graphs,
which will be used to characterize large cliques.

\begin{definition}[Nested clique]\label{def:nested-clique}
Let $G$ be an interval graph with an interval representation
$\{I_v=[\ell_v,r_v] : v\in V(G)\}$.  
A clique $C=\{v_1,\ldots,v_\ell\}$ is called a \emph{nested clique} if all
intervals $I_{v_i}$ contain a common point $x$ and, after ordering the
vertices so that
$\ell_{v_1} \le \ell_{v_2} \le \cdots \le \ell_{v_\ell}$,
their right endpoints form a nonincreasing sequence:
$ r_{v_1} \ge r_{v_2} \ge \cdots \ge r_{v_\ell}$.
Equivalently, $I_{v_1} \supseteq I_{v_2} \supseteq \cdots \supseteq I_{v_\ell}$.
\end{definition}

\begin{definition}[Staircase clique]\label{def:staircase-clique}
Let $G$ be an interval graph with an interval representation
$\{I_v=[\ell_v,r_v] : v\in V(G)\}$.  
A clique $C=\{v_1,\ldots,v_\ell\}$ is called a \emph{staircase clique} if all
intervals $I_{v_i}$ contain a common point $x$ and, after ordering the
vertices so that
  $\ell_{v_1} \le \ell_{v_2} \le \cdots \le \ell_{v_\ell}$,
their right endpoints form a nondecreasing sequence:
$r_{v_1} \le r_{v_2} \le \cdots \le r_{v_\ell}$.
\end{definition}

We now show that every sufficiently large clique contains a large substructure
of one of these two types.

\begin{lemma}\label{lem:ES-clique-pattern-general}
Let $G$ be an interval graph and let $C$ be a clique of $G$ with
$|C|\ge \ell^{2}$, for some integer $\ell\ge 2$.
Then there exists a subset $S\subseteq C$ of size $\ell$ such that the intervals
$\{I_v : v\in S\}$ (under some fixed interval representation of $G$)
form one of the two configurations in
Figure~\ref{fig:es-interval-patterns-general}:
\begin{enumerate}
  \item a \emph{nested} pattern, i.e.\ their right endpoints form a
        nonincreasing sequence when ordered by left endpoints; or
  \item a \emph{staircase} pattern, i.e.\ their right endpoints form a
        nondecreasing sequence when ordered by left endpoints.
\end{enumerate}
In particular, $S$ is a clique of size $\ell$ contained in $C$ whose
intervals have one of these two forms.
\begin{figure}[ht]
  \centering
  \begin{subfigure}{0.48\textwidth}
    \centering
    \begin{tikzpicture}[scale=0.3]
      \draw[dashed] (0,0.5) -- (0,4.5);
      \node[above] at (0,4.5) {$x$};

      \foreach \y/\l/\r in {
        4/ -3/ 3,
        3.4/-2.5/ 2.6,
        2.8/-2/ 2.2,
        2.2/-1.5/ 1.7,
        1.6/-1/ 1.2,
        1.0/-0.5/ 0.7,
        0.4/ -0.2/ 0.2
      }{
        \draw[thick] (\l,\y) -- (\r,\y);
      }
    \end{tikzpicture}
    \caption{Nested pattern: right endpoints nonincreasing.}
  \end{subfigure}\hfill
  \begin{subfigure}{0.48\textwidth}
    \centering
    \begin{tikzpicture}[scale=0.3]
      \draw[dashed] (0,0.5) -- (0,4.5);
      \node[above] at (0,4.5) {$x$};

      \foreach \y/\l/\r in {
        4/ -3/ 0.6,
        3.4/-2.5/1.2,
        2.8/-2/ 1.8,
        2.2/-1.5/2.4,
        1.6/-1/ 3.0,
        1.0/-0.5/3.6,
        0.4/ 0/ 4.2
      }{
        \draw[thick] (\l,\y) -- (\r,\y);
      }
    \end{tikzpicture}
    \caption{Staircase pattern: right endpoints nondecreasing.}
  \end{subfigure}
  \caption{The two interval configurations on $\ell$ vertices guaranteed by
  Lemma~\ref{lem:ES-clique-pattern-general}.}
  \label{fig:es-interval-patterns-general}
\end{figure}
\end{lemma}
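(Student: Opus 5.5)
The plan is to apply the Erd\H{o}s--Szekeres theorem to the sequence of right endpoints, after sorting the clique by left endpoints.

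First, fix the interval representation and list the vertices of $C$ as $u_1,\ldots,u_m$ with $m=|C|\ge \ell^2$ and $\ell_{u_1}\le \ell_{u_2}\le\cdots\le \ell_{u_m}$. Break ties in left endpoints arbitrarily but consistently, say by vertex index. Then form the real sequence $r_{u_1},r_{u_2},\ldots,r_{u_m}$. By the Erd\H{o}s--Szekeres theorem, any sequence of more than $(\ell-1)^2$ reals contains a monotone nondecreasing subsequence of length $\ell$ or a monotone nonincreasing subsequence of length $\ell$. This version allows equal values, so repeated right endpoints cause no trouble. Since $m\ge \ell^2>(\ell-1)^2$, we obtain indices $i_1<\cdots<i_\ell$ whose right endpoints are either nonincreasing or nondecreasing.

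Let $S=\{u_{i_1},\ldots,u_{i_\ell}\}$. Because the indices increase, the order on $S$ induced by our ordering is still by nondecreasing left endpoints. So $S$ is nested in the first case and a staircase in the second, exactly as in Definitions~\ref{def:nested-clique} and~\ref{def:staircase-clique}.

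It remains to check the common-point condition in those definitions. Pairwise intersecting intervals on the line share a common point by the Helly property in dimension one: take $x=\max_{v\in S}\ell_v$. For any $w\in S$ we have $\ell_w\le x$. Also $x=\ell_{v^*}\le r_w$ for some $v^*\in S$, because $I_{v^*}$ and $I_w$ intersect. Hence $x\in I_w$ for every $w\in S$. Clearly $S\subseteq C$ is a clique. In the nested case, nondecreasing left endpoints together with nonincreasing right endpoints give $I_{v_1}\supseteq\cdots\supseteq I_{v_\ell}$, which matches the stated equivalence.

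There is no real obstacle here. The only care needed is the treatment of ties, both in left endpoints (fixed by a consistent tie-break) and in right endpoints (handled by the non-strict form of Erd\H{o}s--Szekeres). The longest monotone subsequence can also be found in polynomial time by standard dynamic programming, so $S$ is computable efficiently, as the kernelization requires.
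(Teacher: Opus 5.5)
Your proposal is correct and follows the paper's proof essentially step for step: you sort the clique by left endpoints, apply Erd\H{o}s--Szekeres to the right endpoints, and use the one-dimensional Helly property to obtain the common point. Your explicit choice $x=\max_{v\in S}\ell_v$ and your handling of ties (the non-strict form of Erd\H{o}s--Szekeres and a consistent tie-break) spell out details the paper leaves implicit.
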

\begin{proof}
Fix an interval representation $\mathcal{I}=\{I_v=[\ell_v,r_v] : v\in V(G)\}$
of $G$. Since $C$ is a clique, the intervals $\{I_v : v\in C\}$ are pairwise
intersecting. By the Helly property for intervals~\cite{984029}, there exists a point $x$ that
is contained in every interval of $C$.
Order the vertices of $C$ as
$v_1,v_2,\ldots,v_m$,
where $m=|C|$, so that
\[
\ell_{v_1}\leq \ell_{v_2}\leq \cdots \leq \ell_{v_m}.
\]
Consider the sequence of right endpoints
$r_{v_1}, r_{v_2}, \ldots, r_{v_m}$.
Since $m\geq \ell^2$, by the Erdős--Szekeres monotone subsequence Theorem~\cite{Erdos1987},
this sequence contains a monotone subsequence of length $\ell$. Thus there
exist indices
$1\leq i_1<\cdots<i_\ell\leq m$
such that either
$r_{v_{i_1}}\geq r_{v_{i_2}}\geq \cdots \geq r_{v_{i_\ell}}$,
or
$r_{v_{i_1}}\leq r_{v_{i_2}}\leq \cdots \leq r_{v_{i_\ell}}$.
Let $S:=\{v_{i_1},v_{i_2},\ldots,v_{i_\ell}\}$.
Since $S\subseteq C$, the set $S$ is a clique. Moreover, the left endpoints of
the intervals in $S$ are ordered increasingly by the choice of the indices.

In the first case, the right endpoints are nonincreasing. Since all intervals
contain the common point $x$, the intervals form a nested pattern:
\[
I_{v_{i_1}}\supseteq I_{v_{i_2}}\supseteq \cdots \supseteq I_{v_{i_\ell}}.
\]
In the second case, the right endpoints are nondecreasing, and hence the
intervals form a staircase pattern.

Therefore, $C$ contains a subset $S$ of size $\ell$ whose intervals form either
a nested clique or a staircase clique. This proves the lemma.
\end{proof}

We now handle the nested clique case by showing how to safely delete a vertex.

\begin{red}\label{rr:nested-reduction}
Let $G$ be an interval graph and let 
$v_1,\ldots,v_{2k+3}$ be vertices that form a clique whose intervals satisfy
$I_{v_1} \supseteq I_{v_2} \supseteq \cdots \supseteq I_{v_{2k+3}}$
in some interval representation of $G$.   
Delete $v_{k+2}$ and keep $k$ unchanged.
\end{red}

\begin{lemma}
    Reduction Rule \ref{rr:nested-reduction} is safe.
\end{lemma}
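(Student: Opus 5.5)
The plan is to prove both directions of the equivalence between $(G,k)$ and $(G-v,k)$ with $v:=v_{k+2}$. The only structure I use is that the nested pattern gives neighbourhood domination: for $i<k+2<j$ we have $I_{v_j}\subseteq I_v\subseteq I_{v_i}$. In an interval graph, containment of intervals implies $N_G[v_j]\subseteq N_G[v]\subseteq N_G[v_i]$. The recurring device is a pigeonhole step. A set $F$ of at most $k$ edges touches at most $k$ of the $k+1$ vertices $v_1,\ldots,v_{k+1}$, so at least one of them is incident to no edge of $F$. Likewise, at least one of the $k+1$ vertices $v_{k+3},\ldots,v_{2k+3}$ is incident to no edge of $F$.

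\emph{Forward direction.} Let $F$ be a solution for $(G,k)$, and let $F_v$ be the set of edges of $F$ incident to $v$. I claim that $F\setminus F_v$ is a solution for $G-v$. Only the component $C$ of $G-F$ containing $v$ changes. Pick $w\in\{v_1,\ldots,v_{k+1}\}$ untouched by $F$. Then $N_{G-F}[w]=N_G[w]\supseteq N_G[v]\supseteq N_{G-F}[v]$, and $w\in C$ because $wv\in E(G)\setminus F$. Take two vertices $x,y\in C\setminus\{v\}$ and a shortest $x$--$y$ path in $G-F$ that passes through $v$. The two neighbours of $v$ on this path are also neighbours of $w$. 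Replacing $v$ by $w$ therefore gives a walk of the same length, and if $w$ already lies on the path we can shortcut it instead. Hence $C\setminus\{v\}$ remains connected with diameter at most $s$, and all other components are unaffected.

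\emph{Backward direction.} Let $F'$ be a solution for $(G-v,k)$. Pick $w\in\{v_1,\ldots,v_{k+1}\}$ and $w'\in\{v_{k+3},\ldots,v_{2k+3}\}$, both untouched by $F'$. Since $ww'\in E(G)$, both lie in one component $C$ of $G-v-F'$. The key observation is that every neighbour $x$ of $v$ in $G$ is also a neighbour of $w$, because $I_v\subseteq I_w$. Since $w$ lost no edges, every such $x$ lies in $C$. So reinserting $v$ with all its edges and keeping $F:=F'$ creates no edges between different components, and $C\cup\{v\}$ is a component of $G-F$.

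Distances among the old vertices of $C$ cannot increase. For $y\in C$, we have $N_{G-F}[w']=N_G[w']\subseteq N_G[v]$, and all these edges of $v$ are present. Therefore any shortest $w'$--$y$ path, with its first vertex replaced by $v$, shows $\dist(v,y)\le\dist(w',y)\le s$. This argument needs only $s\ge 1$.

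I expect the main obstacle to be the backward direction. There, reinserting $v$ might in principle force new deletions towards other components, which would break the budget. The untouched dominating vertex $w$ is exactly what rules this out. This is why the rule needs $k+1$ vertices on each side of $v_{k+2}$, which gives $2k+3$ in total.
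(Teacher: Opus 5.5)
There is a gap. Your overall strategy matches the paper's: domination of $v$ by $v_1,\dots,v_{k+1}$ in the forward direction, domination of $v_{k+3},\dots,v_{2k+3}$ by $v$ in the backward direction, and path surgery. However, the ``recurring device'' is false as stated. An edge of $F$ may have both endpoints among $v_1,\dots,v_{k+1}$, so $k$ edges can touch up to $2k\ge k+1$ of these vertices. For instance, take $k=1$, $s\ge 2$, and $G$ the nested clique on $v_1,\dots,v_5$. Then $F=\{v_1v_2\}$ is a legitimate, if wasteful, solution, and neither vertex of $\{v_1,v_2\}$ is untouched. You take $F$ and $F'$ to be arbitrary solutions, so the untouched vertices $w$ and $w'$ need not exist, and both of your directions rest on them.

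The missing ingredient is inclusion-minimality, which is what the paper uses. Replace $F$ (resp.\ $F'$) by an inclusion-minimal solution contained in it. The clique $\{v_1,\dots,v_{2k+3}\}$ in $G$ (resp.\ $\{v_1,\dots,v_{2k+3}\}\setminus\{v\}$ in $G-v$) cannot be split by at most $k$ deletions, since that needs at least $2k+1$. So it lies in one component. A minimal solution never deletes an edge with both endpoints in one component, because restoring it keeps the components and only shortens distances. Hence every deleted edge has at most one endpoint in the clique, and your pigeonhole count becomes valid.

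Once you have minimality, untouched vertices are no longer needed. In the forward direction the paper takes any $r\in\{v_1,\dots,v_{k+1}\}$. The path neighbours $x,y$ of $v$ lie in the same component as $r$, so $xr,yr\notin F$. In the backward direction, every neighbour of $v$ is adjacent to all of $v_1,\dots,v_{k+1}$, which lie in $C$, so detaching it would cost $k+1$ deletions. For $w'$, the inclusion $N_{G-v-F'}(w')\subseteq N_G(w')\subseteq N_G[v]$ holds for every $F'$. With this repair your proof coincides with the paper's.
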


We now handle the staircase clique case. Note that this reduction rule is given only for $s=2$ unlike the previous reduction rules.

\begin{red}\label{rr:staircase-pair-test}
Let $(G,k)$ be an instance of \textsc{$2$-Club Cluster Edge Deletion} and suppose that
$G$ contains a staircase clique
$C=\{v_1,\ldots,v_{2k+4}\}$, indexed so that
\[
\ell_{v_1} \le \ell_{v_2} \le \cdots \le \ell_{v_{2k+4}}
\quad\text{and}\quad
r_{v_1} \le r_{v_2} \le \cdots \le r_{v_{2k+4}}.
\]
in some interval representation of $G$.
Let $a:=v_{k+2}$ and $b:=v_{k+3}$.
If there exists a vertex $z\in\{a,b\}$ such that for all distinct
$x,y\in N_G(z)$ we have
\[
N_G(x)\cap N_G(y)\supsetneq \{z\},
\]
then delete $z$ and keep $k$ unchanged.
Otherwise, return a no-instance.
\end{red}

\begin{lemma}
    Reduction rule \ref{rr:staircase-pair-test} is safe.
\end{lemma}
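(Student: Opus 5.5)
The proof has two parts that need different arguments: the \emph{deletion branch}, where some $z\in\{a,b\}$ passes the test, and the \emph{rejection branch}, where neither $a$ nor $b$ passes. In both parts the main tool is a counting argument on the staircase clique, in the spirit of Reduction Rule~\ref{rr:nested-reduction}. Since $|F|\le k$, at most $k$ of the vertices $v_1,\ldots,v_{k+1}$ are incident to an edge of $F$, and likewise for $v_{k+4},\ldots,v_{2k+4}$. So at least one vertex $v_L$ with $L\le k+1$ and at least one vertex $v_R$ with $R\ge k+4$ keep all of their edges in $G-F$. Every interval of $C$ contains the common point $x_0$, and $\ell_{v_L}\le \ell_a\le\ell_b\le \ell_{v_R}$ and $r_{v_L}\le r_a\le r_b\le r_{v_R}$. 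Hence $I_{v_L}\cup I_{v_R}\supseteq I_a\cup I_b$, so every neighbour of $a$ or $b$ is adjacent in $G$ to $v_L$ or to $v_R$. I will use this ``flanking'' pair repeatedly. Note also that $v_L$ and $v_R$ are adjacent and untouched by $F$, so they lie in one component of $G-F$.

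\textbf{Rejection branch.} Suppose neither $a$ nor $b$ passes the test. Then there are distinct $x,y\in N_G(a)$ whose only common neighbour is $a$, and distinct $x',y'\in N_G(b)$ whose only common neighbour is $b$. Because of the flanking pair, $x$ and $y$ are non-adjacent; after renaming, $I_x$ lies to the left of $I_y$. Every clique vertex other than $a$ misses $I_x$ or $I_y$, so in particular $b$ does. The same holds for $x',y'$ with respect to $b$.

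Now assume, for contradiction, that a solution $F$ with $|F|\le k$ exists, and fix the untouched flanking pair $v_L,v_R$. I would show that the component of $G-F$ containing $v_L,v_R$ must contain $a$ or $b$ together with one of the bad pairs. Then distance $2$ between that pair can be realised only through $a$ (respectively $b$). The plan is to play $x,y$ against $x',y'$ and derive a contradiction: keeping both pairs at distance $2$ requires the edges $ax,ay,bx',by'$. The path structure then forces some pair such as $x,y'$ to be at distance at least $3$ with no common neighbour at all. Separating them would cut many edges of the untouched flanking vertices, which is impossible.

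I expect this case analysis to be the main obstacle. Whether $x'$ and $y'$ coincide with $x$ and $y$ or cross them has to be handled using the left-endpoint and right-endpoint orderings of $a$ and $b$ inside the staircase.

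\textbf{Deletion branch.} Say $z$ passes the test. I would prove that $(G,k)$ and $(G-z,k)$ are equivalent.

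\emph{From a solution of $G$ to a solution of $G-z$.} Take $F$, drop the edges at $z$, and check that each component $C\setminus\{z\}$ is still a $2$-club. The only danger is a pair $x,y$ that was joined only through $z$ in $G-F$. By hypothesis $x,y$ have another common neighbour in $G$. The flanking argument should let me choose that neighbour among $v_L,v_R$, or else reroute the path, at no extra cost. If this fails, I would modify $F$ by an exchange argument rather than restrict it.

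\emph{From a solution $F'$ of $G-z$ to a solution of $G$.} The untouched flanking vertices lie in one component $D$ of $(G-z)-F'$. I would put $z$ into $D$. Every neighbour of $z$ lies in $D$, since it is adjacent to $v_L$ or $v_R$ by untouched edges. So no new edge needs to be deleted, and $F'$ itself works. Distances inside $D\cup\{z\}$ stay at most $2$: $z$ is adjacent to $v_L$ and $v_R$, and any vertex of $D$ at distance $2$ from $z$ is reached through a common neighbour. The hypothesis $N_G(x)\cap N_G(y)\supsetneq\{z\}$ is what guarantees this.
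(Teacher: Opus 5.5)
Your overall architecture matches the paper's: neighbours of $a,b$ are forced into the clique's component, and rejection goes through a cross pair at distance $\ge 3$. However, the decisive steps are either missing or justified by the wrong fact. \textbf{Rejection branch.} You leave the core open. The detour through which edges $F$ must keep ($ax,ay,bx',by'$) is not where the contradiction comes from, since keeping those edges is consistent on its own. The contradiction lives in $G$ itself, and the missing idea is to orient the witnesses. With $I_x$ left of $I_y$, the vertex $b$ must meet $I_y$. Otherwise $r_y<\ell_b$, both witnesses meet $I_a$ only inside $[\ell_a,\ell_b)\subseteq I_{v_{k+1}}$, and $v_{k+1}$ is a second common neighbour. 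Hence $b\not\sim x$, i.e.\ $r_x<\ell_b$. Symmetrically, with $I_{x'}$ left of $I_{y'}$, $a$ must meet $I_{x'}$ (else $v_{k+4}$ is a second common neighbour of $x',y'$), so $a\not\sim y'$, i.e.\ $\ell_{y'}>r_a$. Then $r_x<\ell_b\le r_a<\ell_{y'}$ gives $x\not\sim y'$. Any common neighbour $t$ of $x,y'$ has $I_t\supseteq[r_x,\ell_{y'}]\ni\ell_y$ (as $r_x<\ell_y\le r_a$). So $t$ is a common neighbour of $x,y$ with $t\neq a$ (as $a\not\sim y'$), a contradiction. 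Thus $\dist_G(x,y')\ge 3$ already in $G$. Meanwhile $x$ is adjacent to all of $v_1,\ldots,v_{k+1}$ and $y'$ to all of $v_{k+4},\ldots,v_{2k+4}$, so both stay in the clique's component under any $F$ with $|F|\le k$. This specific pair (left witness of $a$, right witness of $b$) and the proof that it has no common neighbour are what your proposal lacks. No case analysis on coincidences between $\{x,y\}$ and $\{x',y'\}$ is needed.

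\textbf{Deletion branch.} First, your count ``at most $k$ of $v_1,\ldots,v_{k+1}$ meet $F$'' is false for arbitrary $F$, since one deleted edge can touch two of them. Take $F$ inclusion-minimal, so no deleted edge lies inside a component; then the flanking pair is a valid substitute for the paper's $(k+1)$-neighbour count. \emph{Forward direction.} Take non-adjacent neighbours $p,q$ of $z$ with $I_p$ left of $I_q$. Neither flanking vertex need be a common neighbour, since $I_{v_L}$ may end before $\ell_q$ and $I_{v_R}$ may start after $r_p$. So ``choose it among $v_L,v_R$'' fails. What works: by minimality $pq\notin E(G)$ (else it would survive). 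The extra common neighbour $r\neq z$ from the hypothesis contains the gap $[r_p,\ell_q]$, which lies inside $I_z$, so $r\sim z$. Hence $r\in C_z$ by your flanking argument, and $pr,qr$ survive by minimality. \emph{Backward direction.} Your justification is wrong. The hypothesis concerns only pairs of \emph{neighbours} of $z$ and says nothing about a vertex $p\in D$ with $p\not\sim z$; in fact this direction does not use the hypothesis at all. Instead, if $I_p$ lies left of $I_z$, take $c:=v_{2k+4}$; then $p\not\sim c$. Since $D$ is a $2$-club in $(G-z)-F'$, there is $w$ adjacent to both $p$ and $c$. Then $\ell_w\le r_p<\ell_z\le\ell_c\le r_w$ gives $w\sim z$, and $wz\notin F'$. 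The right side is symmetric with $c:=v_1$.
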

\begin{proof}
We distinguish two cases depending on the outcome of the rule.

\medskip

\noindent \textbf{Case 1 (Deletion case).}
There exists a vertex $z \in \{a,b\}$ satisfying the pair condition.
In this case, the rule returns the instance $(G-z,k)$. We will prove the correctness of this case first.

Fix such a vertex $z\in\{a,b\}$ and let $G':=G-z$. Let $F\subseteq E(G)$ be an inclusion-minimal feasible solution with $|F|\le k$ such that $G-F$ is a cluster of $2$-clubs. Let $C_z$ be the connected component of $G-F$ that contains $z$. We first record two simple consequences. First, all vertices of the staircase clique $C$ belong to $C_z$. Indeed, separating the clique $C$ into two nonempty parts would require deleting at least $|C|-1\ge k+1$ edges, which exceeds the budget. Second, every vertex $x\in N_G(z)$ belongs to $C_z$. To see this, observe that, by the staircase structure, every neighbor of $z$ is adjacent to at least $k+1$ vertices of $C\setminus\{z\}$. 
Indeed, if $z=a$, then every neighbor of $z$ intersects either all of $v_1,\ldots,v_{k+1}$ or all of $v_{k+3},\ldots,v_{2k+4}$; if $z=b$, the analogous statement holds with the sets $v_1,\ldots,v_{k+2}$ and $v_{k+4},\ldots,v_{2k+4}$. In either case, the neighbor has at least $k+1$ neighbors in $C\setminus\{z\}$.
Since $C\setminus\{z\}\subseteq C_z$, if $x$ were not in $C_z$, then all these at least $k+1$ edges from $x$ to $C_z$ would have to be deleted, contradicting $|F|\le k$. Finally, by the inclusion-minimality of $F$, no edge of $F$ has both endpoints inside the same connected component of $G-F$. Indeed, adding back such an edge would not merge two components and could only decrease distances inside that component. Thus, in particular, every edge of $G$ with both endpoints in $C_z$ is present in $G-F$.

\smallskip\noindent
\emph{Forward direction.}
Assume $(G,k)$ is a yes-instance, and let $F$ be an inclusion-minimal solution as above. Consider $F':=F\cap E(G')$ and the graph $G'-F'$. Every connected component of $G-F$ distinct from $C_z$ is unchanged in $G'-F'$ and remains a $2$-club. It suffices to show that $C_z\setminus\{z\}$ is a $2$-club in $G'-F'$. 

Fix distinct vertices $p,q\in C_z\setminus\{z\}$. If $p$ and $q$ are adjacent in $G'-F'$, or if they have a common neighbor in $G'-F'$, then $\dist_{G'-F'}(p,q)\le 2$ and we are done. Suppose therefore that $p$ and $q$ are not adjacent in $G'-F'$ and have no common neighbor in $G'-F'$. Since $C_z$ is a $2$-club in $G-F$, the vertices $p$ and $q$ have distance at most two in $G-F$. Hence their only possible common neighbor in $G-F$ is $z$. In particular, $p,q\in N_G(z)$.
By the premise of the reduction rule, since $p,q\in N_G(z)$ are distinct, we have $N_G(p)\cap N_G(q)\supsetneq \{z\}$.  
Thus there exists a vertex $r\neq z$ adjacent to both $p$ and $q$ in $G$. Since $r$ is adjacent to $z$, the previous observation implies that $r\in C_z$. Hence $p,q,r$ all lie in the same component $C_z$ of $G-F$. By the inclusion-minimality of $F$, the edges $pr$ and $qr$ are not deleted. Therefore $r$ is a common neighbor of $p$ and $q$ in $G'-F'$. It follows that every pair of vertices in $C_z\setminus\{z\}$ has distance at most two in $G'-F'$. Hence $C_z\setminus\{z\}$ is a $2$-club, and therefore $G'-F'$ is a cluster of $2$-clubs. Thus $(G-z,k)$ is a yes-instance. 

\smallskip\noindent
\emph{Backward direction.}
Assume that $(G-z,k)$ is a yes-instance. Let $F'\subseteq E(G')$ with $|F'|\le k$ such that $G'-F'$ is a cluster of $2$-clubs, where $G':=G-z$. Let $C'_z$ be the connected component of $G'-F'$ that contains the clique $C\setminus\{z\}$. This component is well-defined, since $C\setminus\{z\}$ is a clique of size at least $2k+3$, and separating it into two nonempty parts would require more than $k$ edge deletions.

We first observe that every neighbor of $z$ belongs to $C'_z$. Indeed, by the staircase-structure observation above, every vertex adjacent to $z$ is adjacent to at least $k+1$ vertices of $C\setminus\{z\}$. Since all vertices of $C\setminus\{z\}$ lie in the same component $C'_z$ of $G'-F'$, any neighbor $x\in N_G(z)$ outside $C'_z$ would require deleting all these at least $k+1$ edges between $x$ and $C'_z$, contradicting $|F'|\le k$. Hence $N_G(z)\subseteq C'_z$. Now view $F'$ as an edge set of $G$. Since all neighbors of $z$ already lie in $C'_z$, adding $z$ back does not merge $C'_z$ with any other component of $G'-F'$. Thus the only component that changes is $C'_z$, which becomes $C_z:=C'_z\cup\{z\}$. 

It remains to show that $C_z$ is a $2$-club. Distances inside $C'_z$ do not increase when adding $z$. Therefore, it is enough to show that every vertex $p\in C'_z$ satisfies $\dist_{G-F'}(p,z)\le 2$. If $p\in N_G(z)$, then $\dist_{G-F'}(p,z)=1$. Hence assume that $p\notin N_G(z)$. Then $I_p$ is disjoint from $I_z$, and so $I_p$ lies completely either to the left or to the right of $I_z$. Assume first that $I_p$ lies to the left of $I_z$, that is, $r_p<\ell_z$. Let $x:=v_{2k+4}$. Since $x\in C'_z$ and $C'_z$ is a $2$-club in $G'-F'$, we have $\dist_{G'-F'}(p,x)\le 2$. Moreover, $p$ is not adjacent to $x$, so there exists a vertex $w\in C'_z$ adjacent to both $p$ and $x$ in $G'-F'$. Since $w$ is adjacent to $p$, we have $\ell_w\le r_p<\ell_z$. Since $w$ is adjacent to $x$, we have $r_w\ge \ell_x$. By the staircase ordering, and since $x$ and $z$ belong to the clique $C$, we have $\ell_z\le \ell_x\le r_z$. Thus $\ell_w<\ell_z\le r_z$ and $r_w\ge \ell_x\ge \ell_z$, implying $I_w\cap I_z\neq\emptyset$. Hence $w$ is adjacent to $z$ in $G-F'$. Therefore $p-w-z$ is a path of length two in $G-F'$, and so $\dist_{G-F'}(p,z)\le 2$. The case where $I_p$ lies to the right of $I_z$ is symmetric, using $x:=v_1$. Therefore every vertex of $C'_z$ is at distance at most two from $z$. Since $C'_z$ is already a $2$-club, it follows that $C_z=C'_z\cup\{z\}$ is a $2$-club in $G-F'$. All other components remain unchanged. Hence $F'$ is a feasible solution for $(G,k)$.

\medskip

\noindent \textbf{Case 2 (Rejection case).}
Neither $a$ nor $b$ satisfies the condition. In this case, the rule returns that $(G,k)$ is a no-instance. Let us prove the correctness of this case.

Assume that neither $a=v_{k+2}$ nor $b=v_{k+3}$ satisfies the test, i.e.,
there exist vertices $x_a,y_a\in N_G(a)$ and $x_b,y_b\in N_G(b)$ such that
\[
  N_G(x_a)\cap N_G(y_a)=\{a\}
  \qquad\text{and}\qquad
  N_G(x_b)\cap N_G(y_b)=\{b\}.
\]
We show that then $(G,k)$ is a no-instance of $2$-\textsc{Club Cluster Edge Deletion}.

\proofpara{Geometric placement of witnesses}
By the staircase property, every neighbor of $a$ intersects either all intervals
$I_{v_1},\ldots,I_{v_{k+1}}$ or all intervals
$I_{v_{k+3}},\ldots,I_{v_{2k+4}}$.
We say that a neighbor of $a$ intersects the \emph{left side} of $I_a$ if it
intersects $I_a$ in the segment $[\ell_a,\ell_b)$, and it intersects the
\emph{right side} of $I_a$ if it intersects $I_a$ in the segment
$(r_{v_{k+1}},r_a]$.
Analogously, a neighbor of $b$ intersects the \emph{left side} of $I_b$ if it
intersects $I_b$ in the segment $[\ell_b,\ell_{v_{k+4}})$, and it intersects
the \emph{right side} of $I_b$ if it intersects $I_b$ in the segment
$(r_a,r_b]$.

Since $x_a$ and $y_a$ have no common neighbor besides $a$, they cannot both intersect the same side of $I_a$. Indeed, suppose first that both intersect the left side of $I_a$. Then both $I_{x_a}$ and $I_{y_a}$ intersect the segment $[\ell_a,\ell_b)$. The interval $I_{v_{k+1}}$ contains this segment, since $\ell_{v_{k+1}}\le \ell_a$ and $r_{v_{k+1}}\ge \ell_b$. Hence $v_{k+1}$ is adjacent to both $x_a$ and $y_a$, contradicting $N_G(x_a)\cap N_G(y_a)=\{a\}$. 

Similarly, if both $x_a$ and $y_a$ intersected the right side of $I_a$, then both intervals would intersect the segment $(r_{v_{k+1}},r_a]$. 
Since \(b=v_{k+3}\), we have \(\ell_b\le r_{v_{k+1}}\) and \(r_a\le r_b\); therefore the segment \((r_{v_{k+1}},r_a]\) is contained in \(I_b\).
Therefore $b$ would be adjacent to both $x_a$ and $y_a$, again contradicting $N_G(x_a)\cap N_G(y_a)=\{a\}$. 

Therefore, after renaming $x_a$ and $y_a$ if necessary, we may assume that $x_a$ intersects the left side of $I_a$ and $y_a$ intersects the right side of $I_a$. 

The same argument applied to the witness pair $(x_b,y_b)$ shows that they cannot both intersect the same side of $I_b$: if both intersect the left side of $I_b$, then $a$ is a common neighbor of $x_b$ and $y_b$, while if both intersect the right side of $I_b$, then $v_{k+4}$ is a common neighbor of $x_b$ and $y_b$. Hence, after renaming if necessary, we may assume that $x_b$ intersects the left side of $I_b$ and $y_b$ intersects the right side of $I_b$.

We next record two consequences of the choice of the witnesses. Since $y_a$ intersects the right side of $I_a$, it is adjacent to $b$. If $x_a$ were also adjacent to $b$, then $b\in N_G(x_a)\cap N_G(y_a)$, contradicting $N_G(x_a)\cap N_G(y_a)=\{a\}$. Thus $x_a$ is not adjacent to $b$, and hence $r_{x_a}<\ell_b$. Similarly, since $x_b$ intersects the left side of $I_b$, it is adjacent to $a$. If $y_b$ were also adjacent to $a$, then $a\in N_G(x_b)\cap N_G(y_b)$, contradicting $N_G(x_b)\cap N_G(y_b)=\{b\}$. Thus $y_b$ is not adjacent to $a$, and hence $\ell_{y_b}>r_a$.

\proofpara{A cross pair is at distance at least $3$}
Consider the pair $(x_a,y_b)$.
By the observation above, $r_{x_{a}} < \ell_{b} \leq r_{a} <\ell_{y_{b}}$.
Therefore $I_{x_a}\cap I_{y_b}=\emptyset$, and consequently $x_a$ and $y_b$ are not adjacent.

Moreover, we claim that $x_a$ and $y_b$ have no common neighbor in $G$.
Suppose for contradiction that there exists a vertex $t \in V(G)$ adjacent
to both $x_a$ and $y_b$. Then $I_t$ intersects both $I_{x_a}$ and $I_{y_b}$.

Now consider the witness pair $(x_a,y_a)$.
Since $N_G(x_a)\cap N_G(y_a)=\{a\}$, the vertices $x_a$ and $y_a$ are not adjacent. 
Otherwise, $x_a$ and $y_a$ would be common neighbors of each other in addition
to $a$.
Hence the intervals $I_{x_a}$ and $I_{y_a}$ are disjoint.
Since \(I_{x_a}\) and \(I_{y_a}\) are disjoint, and since \(x_a\) intersects the left side of \(I_a\) whereas \(y_a\) intersects the right side of \(I_a\), we obtain
$r_{x_a}<\ell_{y_a}\le r_a < \ell_{y_b}$.
Since $I_t$ intersects both $I_{x_a}$ and $I_{y_b}$,
we have
$\ell_t \le r_{x_a}$ and $r_t \ge \ell_{y_b}$.
Hence the interval $I_t$ contains the entire segment
$[r_{x_a},\ell_{y_b}]$.
As $r_{x_a}<\ell_{y_a}<r_{y_a}<\ell_{y_b}$, it follows that $I_t$ must also intersect $I_{y_a}$.
Hence $t$ is adjacent to both $x_a$ and $y_a$.
This contradicts the assumption that
$N_G(x_a)\cap N_G(y_a)=\{a\}$.
Therefore no such vertex $t$ exists, and hence
$x_a$ and $y_b$ have no common neighbor.
Since they are not adjacent, it follows that $\dist_G(x_a,y_b)\ge 3$.

\proofpara{They cannot be separated within budget}
Let $F \subseteq E(G)$ with $|F| \le k$ be arbitrary, and consider the graph $G-F$.
First, all vertices of the clique $C=\{v_1,\ldots,v_{2k+4}\}$ lie in a single
connected component of $G-F$. 

We now show that $x_a$ and $y_b$ also lie in this component.
Since \(x_a\) intersects the left side of \(I_a\), the staircase structure implies that \(x_a\) is adjacent to each of \(v_1,\ldots,v_{k+1}\).
Hence $x_a$ has at least $k+1$ neighbors in $C$.
If $x_a$ were separated from $C$ in $G-F$, then all these edges would have to be
deleted, implying $|F|\ge k+1$, a contradiction. Thus $x_a$ lies in the same
component as $C$. The same argument shows that $y_b$, which is adjacent to all
vertices $v_{k+4},\ldots,v_{2k+4}$, also lies in this component.

Consequently, $x_a$ and $y_b$ lie in the same connected component of $G-F$.
From the previous argument, we have $\dist_G(x_a,y_b)\ge 3$, and hence
$\dist_{G-F}(x_a,y_b)\ge 3$. Therefore, this component is not a $2$-club.
Since $F$ was arbitrary, no solution of size at most $k$ yields a cluster of
$2$-clubs, and thus $(G,k)$ is a no-instance.
\end{proof}

\medskip
\noindent \textbf{Stage 5: Termination and kernel size.}
We now conclude Stage~4 and combine the previous reductions to derive
the final kernel size.

\begin{theorem}\label{thm:kernel-interval-s2}
$2$-\textsc{Club Cluster Edge Deletion} on interval graphs admits a
polynomial-time vertex kernel with at most $\mathcal{O}(k^{5})$ vertices.
\end{theorem}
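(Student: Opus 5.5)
The kernel is obtained by applying Reduction Rules~\ref{rr:small-large-components}--\ref{rr:staircase-pair-test} exhaustively (with $s=2$) and then bounding the size of an irreducible instance. Since every rule is safe (the lemmas above), the output is equivalent to the input. Every rule either deletes at least one vertex, deletes a component, or stops with a no-instance, and $k$ is never increased. Hence there are at most $|V(G)|$ applications. Deleting a vertex from an interval graph leaves an interval graph, and we simply restrict the representation. So the reduced instance is still an instance of the problem on interval graphs.

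Next we check that each rule can be tested in polynomial time, so the procedure runs in polynomial time.
\begin{itemize}
\item Diameters and components are computed by BFS.
\item Large cliques and independent sets are found via Lemma~\ref{lem:large-structure}, using the standard greedy algorithms on interval graphs.
\item The vertex $u$ and the consecutive block of Lemma~\ref{lem:IS-long-consecutive-neighbour} are found by scanning every vertex against the sorted independent set.
\item The nested or staircase subclique of Lemma~\ref{lem:ES-clique-pattern-general} is found by a longest monotone subsequence computation.
\item The pair test of Reduction Rule~\ref{rr:staircase-pair-test} is checked directly in $O(n^3)$ time.
\end{itemize}

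For the size bound, assume no rule applies and the instance has not been rejected. By Reduction Rules~\ref{rr:small-large-components} and~\ref{rr:cc-count}, there are at most $k$ components, each of diameter at most $3(k+1)$. Suppose some component $H$ has $|V(H)|>(6k+7)^4$. Then Lemma~\ref{lem:large-structure} gives an independent set or a clique of size at least $(6k+7)^2$.

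\emph{Independent set case.} Here $(6k+7)^2\ge 3(k+1)(6k+7)$, so Lemma~\ref{lem:IS-long-consecutive-neighbour} (with $s=2$) yields $u$ and a block of $6k+7$ consecutive neighbours. Then Reduction Rule~\ref{rr:IS-neighbour-irrelevant} applies, which is a contradiction.

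\emph{Clique case.} Apply Lemma~\ref{lem:ES-clique-pattern-general} with $\ell=6k+7\ge 2k+4$. This gives a nested or staircase subclique of size $\ell$, and any $2k+3$ (respectively $2k+4$) of its vertices keep the same pattern. In the nested case Reduction Rule~\ref{rr:nested-reduction} applies. In the staircase case Reduction Rule~\ref{rr:staircase-pair-test} applies: it either deletes a vertex or rejects. Both outcomes contradict irreducibility.

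Hence every component has at most $(6k+7)^4=O(k^4)$ vertices, and the total is at most $k(6k+7)^4=O(k^5)$ vertices. If the algorithm rejects, we output a constant-size no-instance.

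The main obstacle is the staircase step. Reduction Rule~\ref{rr:staircase-pair-test} is phrased so that, once its hypothesis holds, it always makes progress (delete or reject). So the real work is verifying that the large staircase clique from Lemma~\ref{lem:ES-clique-pattern-general} satisfies its indexing hypothesis, including ties among endpoints. I expect that taking the first $2k+4$ vertices of the monotone subsequence suffices.
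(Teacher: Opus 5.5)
Your proposal is correct and follows the paper's proof step for step. You apply Reduction Rules~\ref{rr:small-large-components} and~\ref{rr:cc-count} exhaustively and use the $\max\{\omega,\alpha\}\ge\sqrt{n}$ dichotomy at threshold $(6k+7)^4$; the independent-set case goes through Lemma~\ref{lem:IS-long-consecutive-neighbour} and Reduction Rule~\ref{rr:IS-neighbour-irrelevant}, and the clique case through Lemma~\ref{lem:ES-clique-pattern-general} and Reduction Rules~\ref{rr:nested-reduction} and~\ref{rr:staircase-pair-test}, giving $k(6k+7)^4=\mathcal{O}(k^5)$ vertices. You are also slightly more careful than the paper: it only speaks of ``an independent set of size at least $6k+7$'', whereas your check $(6k+7)^2\ge 3(k+1)(6k+7)$ is what actually makes Lemma~\ref{lem:IS-long-consecutive-neighbour} applicable; and your observation that any initial segment of the monotone subsequence still meets the rules' indexing hypotheses settles the tie issue you raised.
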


\begin{proof}
We apply the reduction rules exhaustively.
First, apply Reduction Rule~\ref{rr:small-large-components}.
Consequently, every connected component $H$ of the resulting graph
satisfies $2 < \diam(H) < 3(k+1)$. 
Next, apply Reduction Rule~\ref{rr:cc-count}, which ensures that the
resulting graph has at most $k$ connected components.

Let $H$ be any connected component and let $n := |V(H)|$.
It is a standard property of interval graphs that
$\max\{\omega(H), \alpha(H)\} \ge \left\lceil \sqrt{n} \right\rceil$,
and such a clique or independent set can be found in polynomial time.
Let $T := (6k+7)^2$.
If $n > T^2$, then
$\left\lceil \sqrt{n} \right\rceil \ge T$,
and hence $H$ contains either:

\begin{itemize}
    \item an independent set of size at least $6k+7$, in which case Reduction Rule~\ref{rr:IS-neighbour-irrelevant} applies; or
    \item a clique of size at least $T$, in which case, by Lemma~\ref{lem:ES-clique-pattern-general}, there exists a clique of size $2k+4$ whose intervals form either a nested clique or a staircase clique. We then apply Reduction Rule~\ref{rr:nested-reduction} or Reduction Rule~\ref{rr:staircase-pair-test}, respectively.
\end{itemize}

Thus, as long as $|V(H)| > T^2$, some reduction rule is applicable,
either reducing the instance size or correctly rejecting it.
Therefore, when no rule applies, every connected component satisfies
$|V(H)| \le T^2 = \mathcal{O}(k^{4})$.
Since there are at most $k$ connected components, the total number of
vertices is at most
$k \cdot T^2 = \mathcal{O}(k^{5})$.
All reduction rules are safe and can be applied in polynomial time,
hence the resulting instance is a polynomial kernel equivalent to the
input.
\end{proof}

Having established hardness on split graphs, we next identify a natural class of graphs—unit interval graphs—on which the problem becomes polynomial-time solvable.

\begin{theorem}\label{thm:unitinterval-poly}
For every fixed integer $s\ge 1$, the problem \textsc{$s$-Club Cluster Edge Deletion} can be solved in time $\mathcal{O}(n^{2})$ on unit interval graphs.
\end{theorem}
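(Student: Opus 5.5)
Our plan is to show that for unit interval graphs an optimal solution can be assumed to partition each connected component into \emph{consecutive} blocks of the unit interval (proper interval) ordering, and then to run a dynamic program over this ordering. Fix a unit interval representation and order the vertices $v_1,\ldots,v_n$ by left endpoints; since all intervals have equal length, right endpoints are ordered the same way, and the ordering is a proper interval ordering. In particular, $v_iv_j\in E(G)$ with $i<j$ implies that $\{v_i,\ldots,v_j\}$ is a clique, and each closed neighborhood is a consecutive interval of the ordering.

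\textbf{Step 1 (solutions are consecutive partitions).} Take an optimal edge-deletion set $F$, which we may assume inclusion-minimal. By the argument used in the proof of Reduction Rule~\ref{rr:staircase-pair-test}, $F$ then consists exactly of the edges between distinct components of $G-F$, so the solution is determined by a vertex partition $\mathcal{P}$, and its cost is the number of edges of $G$ crossing $\mathcal{P}$. We will show that some optimal partition has all parts consecutive in the ordering. The intended argument is an exchange argument: if two parts $A,B$ interleave, reassign vertices so that the part containing the smallest index is the longer prefix. Using the fact that neighborhoods are intervals in the ordering, the number of crossing edges does not increase. Moreover, each new part is an induced subgraph on a consecutive index set of a unit interval graph. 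For consecutive sets $\{v_i,\ldots,v_j\}$ of a connected unit interval graph, the diameter equals the number of greedy jumps from $v_i$ to $v_j$ (take the furthest neighbor each time), and this quantity is monotone under taking consecutive subranges. So we need the uncrossed parts to lie within spans whose diameters are controlled by those of the original parts. This is the step I expect to be the main obstacle. A cleaner route, which I would try first, is to show directly that any $s$-club part $P$ can be replaced by the consecutive hull $\{v_{\min P},\ldots,v_{\max P}\}$: a shortest path in $G[P]$ between two hull vertices can be rerouted through hull vertices, since each intermediate index lies between consecutive path vertices, which are adjacent, so it is adjacent to both. Overlapping hulls then have to be resolved by cutting at a suitable index, with a counting argument showing the crossing edges do not increase.

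\textbf{Step 2 (dynamic program).} Given Step 1, define $D[j]$ as the minimum number of crossing edges for a partition of $v_1,\ldots,v_j$ into consecutive blocks, each inducing a connected graph of diameter at most $s$. Then $D[0]=0$ and
\[
D[j]=\min_{i\le j,\ \diam(G[v_i..v_j])\le s}\bigl(D[i-1]+c(i)\bigr),
\]
where $c(i)$ is the number of edges $v_pv_q$ with $p<i\le q$. Because edges span consecutive cliques, a block $\{v_i,\ldots,v_j\}$ is connected if and only if consecutive vertices are adjacent. Its diameter is the greedy jump count, and for fixed $j$ it is nonincreasing in $i$. All values $c(i)$ can be precomputed in $O(n+m)$ time via prefix sums over the right-neighbor pointer $\rho(p)=\max\{q: v_pv_q\in E\}$. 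The diameter tests for all pairs $(i,j)$ can be computed in $O(n^2)$ total time by running, for each $i$, the greedy jump from $v_i$ and extending $j$ incrementally.

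The final answer is yes if and only if $D[n]\le k$, giving $O(n^2)$ time overall. The correctness of this answer rests on Step 1, and that step is where the care must go.
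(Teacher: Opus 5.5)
\textbf{Gap 1: Step~1 is unproved.} Step~1 is a plan rather than a proof, and the whole correctness of your algorithm rests on it. The conclusion of your hull observation is correct. If $P$ is an $s$-club, then $G[\{v_{\min P},\ldots,v_{\max P}\}]$ and every consecutive sub-range of it are $s$-clubs, by monotonicity of the greedy jump count. So it would suffice to place cuts so that every block lies inside the hull of some part of an optimal partition, without increasing the number of crossing edges.

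That counting statement is exactly what is missing. When several hulls overlap, some edges that were inside a part now straddle a cut. Each of them must be paid for, injectively, by a crossing edge of the original partition that is no longer cut. The fact that neighbourhoods are intervals does not by itself give such a charging, and your interleaving exchange stops at asserting that ``the number of crossing edges does not increase''.

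You also cannot fall back on the paper's route. The paper claims that some optimal solution splits no maximal clique, and then runs a DP over blocks of at most $s$ consecutive maximal cliques. That lemma already fails on $P_{s+2}$: every feasible solution there deletes an edge, which is itself a maximal clique. Moreover, consecutive maximal cliques overlap, so blocks of cliques do not even define a vertex partition. Your formulation via consecutive blocks of vertices is the natural statement to aim for, but it has to be proved.

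\textbf{Gap 2: the recurrence computes the wrong cost.} Even granting Step~1, summing $c(i)$ over block starts counts an edge $v_pv_q$ once for every cut in $(p,q]$. An optimal partition can contain a whole block inside the span of one edge.

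A counterexample already exists for $s=1$. Take a clique $Q$ of $t\ge 4$ unit intervals with left endpoint $0$, three single intervals $x,y,z$ with left endpoints $0.95$, $1.05$, $1.15$, and a clique $Q'$ of $t$ unit intervals with left endpoint $2.1$. Every solution with fewer than $t$ deletions has parts $Q\cup\{x\}$, $\{y\}$, $\{z\}\cup Q'$ and deletes exactly $xy,xz,yz$, so the optimum is $3$. Your recurrence charges this partition $2+2=4$: the cut before $y$ is crossed by $xy,xz$, and the cut before $z$ by $xz,yz$, so $xz$ is paid twice. Every other consecutive partition is charged at least $t$. Hence for $k=3$ your DP rejects a yes-instance. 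The paper's recurrence has the same defect.

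The repair is easy. Charge the block $\{v_i,\ldots,v_j\}$ with the number of edges $v_pv_q$ with $i\le p\le j<q$, so that each crossing edge is paid exactly once, at the block of its left endpoint. The indices $p\le j$ with $\rho(p)>j$ form a suffix of $\{1,\ldots,j\}$, so this quantity is available in $O(1)$ per pair from prefix sums of $\rho$, and the $O(n^2)$ bound is preserved.
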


\paragraph{Acknowledgments.}
The author sincerely thanks the anonymous reviewers of an earlier version of this work for their detailed feedback and insightful suggestions. Their comments led to substantial improvements in the exposition and organization of the paper, particularly in the presentation of the interval-graph kernelization.

\bibliographystyle{unsrt}
\bibliography{bibliography}

\clearpage
\appendix

\input{appendix}

\end{document}

%% file: appendix.tex
\section{More preliminaries}

\paragraph{Parameterized complexity and kernelization.}

We study the problem from the perspective of parameterized complexity,
taking the solution size $k$ as the parameter.
A \emph{kernelization} is a polynomial-time algorithm that transforms an
instance $(G,k)$ into an equivalent instance $(G',k')$ such that
$|V(G')|$ is bounded by a function of $k$ and $k'\le k$.
If this bound is polynomial in $k$, we say that the problem admits a
\emph{polynomial kernel}.
In this work, all kernels are \emph{vertex kernels}, meaning that the size
of the reduced instance is measured by the number of vertices. We refer to \cite{marekcygan,Downey} for further details on parameterized complexity.

\section{Omitted Proofs}

\subsection{Proof of Theorem~\ref{split NP-hard}}
\begin{proof}
We give a polynomial-time reduction from the classical \textsc{Clique} problem, which asks whether a given graph $G$ contains a clique of size at least $k$, and which is NP-hard even when restricted to $r$-regular graphs~\cite{10.5555/574848}.

\proofpara{Reduction construction}
Let $I = (G, k)$ be an instance of \textsc{Clique}, where $G$ is an $r$-regular graph with vertex set $V(G)$ and edge set $E(G)$.
We construct an equivalent instance $I' = (G', k')$ of 2-Club Cluster Edge Deletion, where $k' := r(n-k)$ and $n = |V(G)|$.

\smallskip
\noindent
\emph{Step~1: Subdivision and clique formation.}
For each edge $uv\in E(G)$, introduce a new vertex $e_{uv}$
adjacent to both $u$ and $v$, and delete the edge $uv$.
Let $C:=\{e_{uv} : uv \in E(G)\}$ denote the set of all subdivision vertices.
Next, make $C$ a clique by adding all edges between every pair of vertices in $C$.

\smallskip
\noindent
\emph{Step~2: Adding a universal auxiliary clique.}
Introduce a new clique $C'$ of size $k'+1$ and make every vertex of $C$ adjacent to every vertex of $C'$.
Thus, $C\cup C'$ induces a clique in $G'$, while the original vertices $V(G)$ form an independent set.
The clique $C'$ acts as a universal connector: since every vertex of $C$ is adjacent to every vertex of $C'$, the set $C \cup C'$ forms a clique, ensuring distance at most $2$ within this set and forcing it to remain in a single component in any feasible solution.

\smallskip
\noindent
\emph{Step~3: Parameter setting.}
This completes the construction of $G'$.
Clearly, $G'$ is a split graph, as its vertex set can be partitioned into the clique $C\cup C'$ and the independent set $V(G)$.
The construction can be performed in polynomial time.

\proofpara{(\textrm{If}) direction}
Assume that $G$ contains a clique $R\subseteq V(G)$ of size $k$.
We construct an edge-deletion set $F\subseteq E(G')$ as follows:
for each vertex $v\in V(G)\setminus R$, delete all edges incident to $v$ in $G'$.
Since each such vertex $v$ has degree $r$ (adjacent to $r$ subdivision vertices in $C$),
the number of deleted edges is exactly
$|F| = r(n-k) = k'$.
We now show that every connected component of $G'-F$ has diameter at most~2.
Each vertex $v\in V(G)\setminus R$ becomes an isolated vertex in $G'-F$, and hence forms a connected component of diameter~0.
All other vertices lie in a single connected component
$S := R \cup C \cup C'$.
Observe that:
\begin{itemize}
    \item any two vertices of $C\cup C'$ are adjacent (distance~1);
    \item for any distinct $x,y\in R$, there exists a subdivision vertex $e_{xy}\in C$ adjacent to both, since $xy \in E(G)$;
    hence, $\dist_S(x,y)=2$;
    \item for $x\in R$ and $y\in C\cup C'$, either $x$ is adjacent to $y$ (if $y=e_{x,z}$) or there exists $y'\in C$ such that $x$--$y'$--$y$ is a path of length~2.
\end{itemize}
Therefore, $\diam(S)\le 2$, and thus all connected components of $G'-F$ have diameter at most~2.
Hence, $I'$ is a yes-instance.

\proofpara{(\textrm{Only if}) direction}
Conversely, suppose that $I'=(G',k')$ is a yes-instance; that is,
there exists $F\subseteq E(G')$ with $|F|\le k'$ such that every connected component of $G'-F$ has diameter at most~2.
First observe that all vertices in $C\cup C'$ must lie in the same connected component,
since between any two such vertices there exist at least $k'+1$ internally edge-disjoint paths,
and deleting at most $k'$ edges cannot disconnect them.
Let this component be $S$.
Let $R := S\cap V(G)$ denote the subset of original vertices lying in $S$.
We claim that $R$ induces a clique of size at least~$k$ in $G$.
Indeed, if two distinct vertices $x,y\in R$ were \emph{not} adjacent in $G$,
then in $G'$ their shortest connecting path would have length~3
($x$–$e_{xz}$–$z$–$e_{zy}$–$y$ for some common neighbor $z$),
contradicting the assumption that every component has diameter at most~2.
Hence, $R$ induces a clique in $G$.
Finally, if $|R|\le k-1$, then at least $n-(k-1)$ vertices of $V(G)\setminus R$ must be isolated,
requiring the deletion of all their incident $r$ edges,
for a total of at least $r(n-(k-1)) = k'+r > k'$ edge deletions,
contradicting $|F|\le k'$.
Thus $|R|\ge k$, and $G$ contains a clique of size at least~$k$.

\proofpara{Conclusion}
We have shown that $I$ is a yes-instance of {\sc Clique}
if and only if $I'$ is a yes-instance of {\sc $2$-Club Cluster Edge Deletion}.
The construction can be carried out in polynomial time and produces a split graph.
Therefore, {\sc $2$-Club Cluster Edge Deletion} is NP-hard on split graphs.
\end{proof}

\subsection{Proof of Lemma~\ref{rr:small-large-components}}
\begin{proof}
For the first part, if $\diam(H)\le s$, then $H$ already forms a valid $s$-club and is disconnected from the rest of the graph, so removing $H$ does not affect the existence of any solution of size at most $k$.
For the second part, suppose $H$ has a shortest path between a pair of vertices
$P=(v_1,\ldots,v_{(s+1)(k+1)+1})$ of length $(s+1)(k+1)$. 
After deleting at most $k$ edges, there exist indices $i<j$ such that $v_i$ and $v_j$ lie in 
the same connected component and any $v_i$--$v_j$ path has length at least 
$s+1$. 
Hence $\mathrm{dist}_{G-F}(v_i, v_j) > s$, and no edge set of size at most $k$ can transform $G$ into an $s$-club cluster.
\end{proof}

\subsection{Proof of Lemma~\ref{rr:cc-count}}
\begin{proof}
By assumption, each remaining component $H$ has $\diam(H)>s$.
In any solution, $H$ must be split into smaller components of diameter at most $s$,
and this can only be achieved by deleting at least one edge inside $H$.
Thus each component requires at least one edge deletion.
If there are more than $k$ components, this would require more than $k$ deletions,
contradicting the budget.
\end{proof}

\subsection{Proof of Lemma~\ref{lem:IS-long-consecutive-neighbour}}
\begin{proof}
We first recall a standard property of connected interval graphs:
there exists a shortest path $P=(p_0,\ldots,p_d)$ with $d=\diam(H)$
that is \emph{dominating}, that is, every vertex of $H$ is adjacent to at
least one vertex of $P$.
Indeed, take vertices whose intervals have the leftmost left endpoint
and the rightmost right endpoint, respectively, and consider a shortest
path between them; any other interval intersecting the union of these
intervals must intersect at least one interval corresponding to a vertex
of the path.

Every vertex of $I$ lies in $H$ and hence is adjacent to at least one
vertex of $P$.
Assign each $v\in I$ to one fixed neighbour $p(v)\in V(P)$.
By the pigeonhole principle, there exists a vertex $u\in V(P)$ such that
\[
  |\{v\in I : p(v)=u\}|
    \;\ge\;
  \frac{|I|}{|V(P)|}
    \;\ge\;
  \frac{(s+1)(k+1)(6k+7)}{(s+1)(k+1)}
    \;=\;
  6k+7.
\]
Thus $u$ has at least $6k+7$ neighbours in $I$.

In an interval graph, the neighbours of a fixed vertex among an
independent set form a consecutive block when the independent set is
ordered by left endpoints:
if $I_{v_i}$ and $I_{v_j}$ intersect $I_u$ with $i<j$, then for every
index $t$ with $i<t<j$, the interval $I_{v_t}$ lies between them in the
ordering, and since the intervals in $I$ are pairwise disjoint, it follows
that $I_{v_t}$ also intersects $I_u$.
Hence the neighbours of $u$ in $I$ are $v_a,\ldots,v_b$ for some
indices $a\le b$, and from the above we have $b-a+1\ge 6k+7$.
In particular, $u$ is adjacent to a block of $6k+7$ consecutive vertices
of $I$, say $v_a,\ldots,v_{a+6k+6}$.
\end{proof}

\subsection{Proof of Observation~\ref{obs:interval-intersection-cross-edges}}

\begin{proof} Since $C_i$ is connected in $G-F$, for any two vertices of $C_i$ there is a path between them using only edges of $G-F$. Consecutive vertices on such a path have intersecting intervals. Hence the union of all intervals corresponding to vertices of $C_i$ is connected, and therefore $J(C_i)$ is an interval. Let $v\in X$. Then $I_v$ intersects both $J(C_1)$ and $J(C_2)$. Thus there exist vertices $w_1\in C_1$ and $w_2\in C_2$ such that $I_v\cap I_{w_1}\neq\emptyset$ and $I_v\cap I_{w_2}\neq\emptyset$. Hence $vw_1, vw_2\in E(G)$, unless $v=w_1$ or $v=w_2$. Since $C_1$ and $C_2$ are distinct connected components of $G-F$, the vertex $v$ cannot belong to both. If $v\notin C_1$, then $vw_1$ is an edge of $G$ between two distinct components of $G-F$, and so $vw_1\in F$. Otherwise $v\in C_1$, and then $v\notin C_2$, so $vw_2$ is an edge of $G$ between two distinct components of $G-F$, and hence $vw_2\in F$. Thus every vertex $v\in X$ is incident to at least one edge of $F$ whose endpoints lie in two distinct connected components of $G-F$. \end{proof}

\subsection{Proof of Claim~\ref{claim:middle-interval-squeezed}}
\begin{proof}
We first prove the bound on the left endpoint. Suppose, for a contradiction,
that
$\min_{y \in C_x} \ell_y \;\le\; r_{v_{a+k+1}}$.
Since $I_x \cap I_v \neq \emptyset$ and $x \in C_x$, it follows that
$J(C_x)$ intersects $I_v$, and hence the interval $J(C_u) \cap J(C_x)$
is non-empty.
By the assumption on the left endpoint, the interval
$J(C_u) \cap J(C_x)$ intersects the segment
$[\,\ell_{v_{a+k+1}},\, r_{v_{a+2k+2}}\,]$.
Recall that the vertices
\[
v_{a+k+1}, v_{a+k+2}, \ldots, v_{a+2k+2}
\]
belong to the independent set and their intervals are pairwise disjoint and
ordered by increasing left endpoints. Therefore, each of the intervals
\[
I_{v_{a+k+1}}, I_{v_{a+k+2}}, \ldots, I_{v_{a+2k+2}}
\]
intersects $J(C_u) \cap J(C_x)$.
Let $X := \{\, v_i : I_{v_i} \cap (J(C_u) \cap J(C_x)) \neq \emptyset \,\}$.
Then $|X| \;\ge\; k+1$.
By Observation~\ref{obs:interval-intersection-cross-edges}, every vertex
$v \in X$ is incident to at least one crossing edge, that is, an edge with one
endpoint in $C_u$ and the other in $C_x$. Since the vertices in $X$ form an
independent set, these edges are pairwise distinct. Consequently,
$|F| \;\ge\; |X| \;\ge\; k+1$,
which contradicts the assumption $|F| \le k$.
Therefore,
$\min_{y \in C_x} \ell_y \;>\; r_{v_{a+k+1}}$.

The proof for the right endpoint is symmetric. For the sake of contradiction assume that $\max_{y \in C_x} r_y \;\ge\; \ell_{v_{a+3k+5}}$.
Then, by considering the intervals
$I_{v_{a+2k+2}}, I_{v_{a+2k+3}}, \ldots, I_{v_{a+3k+5}}$,
and applying the same argument, we again obtain at least $k+1$ distinct
crossing edges, contradicting $|F| \le k$. Hence,
$\max_{y \in C_x} r_y \;<\; \ell_{v_{a+3k+5}}$.
Combining the two bounds, we conclude that
\[
J(C_x) \subset \bigl(r_{v_{a+k+1}},\, \ell_{v_{a+3k+5}}\bigr).
\]
This finishes the proof of Claim~\ref{claim:middle-interval-squeezed}.
\end{proof}

\subsection{Proof of Claim~\ref{claim:merge-clubs}}
\begin{proof}
We may assume that $F$ is inclusion-minimal among the feasible edge-deletion sets under consideration. Hence no edge of $F$ has both endpoints in the same connected component of $G-F$: adding such an edge back would not merge two components and could only decrease distances inside the component.
Let $\widehat{G}$ be the graph obtained from $G-F$ by restoring all edges
between $C_u$ and $C_x$. Since $C_u$ and $C_x$ are already $s$-clubs in
$G-F$, it is enough to show that
$\dist_{\widehat{G}}(p,q)\le s$
for every $p\in C_u$ and every $q\in C_x$.
By Claim~\ref{claim:middle-interval-squeezed}, for every $q\in C_x$ we have
$I_q \subset \bigl(r_{v_{a+k+1}},\,\ell_{v_{a+3k+5}}\bigr)$.
Since \(u\) is adjacent to every vertex \(v_a,\ldots,v_{a+6k+6}\), and these intervals are pairwise disjoint and ordered by left endpoints, the interval \(I_u\) spans the region between \(I_{v_{a+k+1}}\) and \(I_{v_{a+3k+5}}\). Hence \[ (r_{v_{a+k+1}},\ell_{v_{a+3k+5}})\subseteq I_u. \]
Hence every vertex of $C_x$ is adjacent to
$u$ in $\widehat{G}$.
Therefore, if $p=u$, then $\dist_{\widehat{G}}(p,q)=1$ for every
$q\in C_x$. Hence, in the rest of the proof, we may assume $p\neq u$.

It remains to consider the case where $\dist_{G-F}(p,u)=s$. Since $p$ and $u$ lie in the same connected component $C_u$ of $G-F$, and we may assume $F$ is inclusion-minimal, no edge of $F$ has both endpoints in $C_u$. Hence, if $p$ were adjacent to $u$ in $G$, then the edge $pu$ would be present in $G-F$, contradicting $\dist_{G-F}(p,u)=s\ge 2$. Therefore $p$ is not adjacent to $u$ in $G$, and so $I_p\cap I_u=\emptyset$. Thus $I_p$ lies completely either to the left or to the right of $I_u$. Without loss of generality, assume that $I_p$ lies completely to the right of $I_u$; the case where $I_p$ lies completely to the left of $I_u$ is symmetric.
We first show that at least one vertex among
\[
v_a,v_{a+1},\ldots,v_{a+k}
\]
belongs to $C_u$. Suppose this is not the case. Then each of these $k+1$ vertices belongs
to a component of $G-F$ distinct from $C_u$. Since all of them are adjacent
to $u$, each of them is incident to at least one edge of $F$ with one endpoint
in $C_u$. These $k+1$ edges are distinct, contradicting $|F|\le k$.
Let $v_i$ be a vertex in
\[
\{v_a,v_{a+1},\ldots,v_{a+k}\}\cap C_u
\]
with minimum index $i$.
In the symmetric case, one uses a vertex of \[\{v_{a+5k+6},\ldots,v_{a+6k+6}\}\cap C_u,\] whose existence follows by the same \(k+1\)-edge deletion argument.
Since $C_u$ is an $s$-club in $G-F$, we have
$\dist_{G-F}(p,v_i) \le s$. Let \(P=(p=p_0,p_1,\ldots,p_\ell=v_i)\) be a shortest \(p\)-\(v_i\) path in \(G-F\). Since \(C_u\) is an \(s\)-club, we have \(\ell\le s\). 
By the path-interval union observation, the union  $\bigcup_{j=0}^{\ell} I_{p_j}$ is a connected interval containing both \(I_p\) and \(I_{v_i}\). Now let \(q\in C_x\).
By Claim~\ref{claim:middle-interval-squeezed}, \(I_q\) lies in the interval \((r_{v_{a+k+1}},\ell_{v_{a+3k+5}})\).
Since \(i\in\{a,\ldots,a+k\}\), the interval \(I_q\) lies strictly to the right of \(I_{v_i}\), and in particular \(I_q\cap I_{v_i}=\emptyset\). 
Since $I_p$ lies to the right of $I_u$, while $I_{v_i}$ lies to the left of the interval containing $I_q$, the interval $I_q$ is contained in the real-line region between $I_{v_i}$ and $I_p$. The union of the intervals along $P$ is a connected interval containing both $I_{v_i}$ and $I_p$; therefore this union must intersect $I_q$. Hence $I_q$ intersects $I_{p_j}$ for some $j\in\{0,\ldots,\ell\}$. Moreover, $j<\ell$, because $p_\ell=v_i$ and $I_q\cap I_{v_i}=\emptyset$.
The vertex \(p_j\) lies in \(C_u\), while \(q\) lies in \(C_x\). In the graph obtained from \(G-F\) by restoring all edges between \(C_u\) and \(C_x\), the edge \(p_jq\) is present. Therefore \[ \dist_{\widehat{G}}(p,q)\le j+1\le \ell\le s. \]
Therefore, the component $C_u \cup C_x$ has diameter at most $s$ in
$\widehat{G}$, and hence is an $s$-club. This completes the proof of
Claim~\ref{claim:merge-clubs}.
\end{proof}

\subsection{Proof of Lemma~\ref{rr:nested-reduction}}

\begin{proof} Let \(u:=v_{k+2}\) and let \(G':=G-u\).

\proofpara{Preliminary observations} First observe that the vertices \(v_1,\ldots,v_{2k+3}\) of the nested clique must stay in one connected component after deleting at most \(k\) edges. Indeed, separating a clique of size \(2k+3\) into two nonempty parts requires deleting at least \(2k+2>k\) edges. 

We also record two simple domination properties. Since \[ I_{v_1}\supseteq I_{v_2}\supseteq \cdots \supseteq I_{v_{2k+3}}, \] we have \(I_{v_i}\supseteq I_u\) for every \(i\le k+1\). Hence every neighbor of \(u\) is adjacent to each of \(v_1,\ldots,v_{k+1}\). Indeed, if \(w\in N_G(u)\), then \(I_w\cap I_u\neq\emptyset\), and this intersection point also lies in each \(I_{v_i}\) with \(i\le k+1\). 

Finally, for every \(u'\in\{v_{k+3},\ldots,v_{2k+3}\}\), we have \(I_{u'}\subseteq I_u\). Therefore every neighbor of \(u'\) is also a neighbor of \(u\), that is, \[ N_G[u']\subseteq N_G[u]. \] 

\medskip \noindent\emph{Forward direction.} Assume that \((G,k)\) is a yes-instance. Let \(F\subseteq E(G)\) be an inclusion-minimal solution with \(|F|\le k\) such that \(G-F\) is a cluster of \(s\)-clubs. Thus no edge of \(F\) has both endpoints in the same connected component of \(G-F\), since adding such an edge back would not merge two components and could only decrease distances inside that component. 

Let \(C\) be the connected component of \(G-F\) containing the nested clique \(v_1,\ldots,v_{2k+3}\). By the observation above, all clique vertices lie in \(C\). 

We first note that every neighbor of \(u\) belongs to \(C\). Indeed, every neighbor of \(u\) is adjacent to all vertices \(v_1,\ldots,v_{k+1}\), which all belong to \(C\). If such a neighbor were outside \(C\), then the \(k+1\) edges from it to \(v_1,\ldots,v_{k+1}\) would all have to be deleted, contradicting \(|F|\le k\). 

Let \(F':=F\cap E(G')\). All components of \(G-F\) different from \(C\) are unchanged in \(G'-F'\). Thus it suffices to show that \(C\setminus\{u\}\) is an \(s\)-club in \(G'-F'\). 

Take any two vertices \(p,q\in C\setminus\{u\}\). Since \(C\) is an \(s\)-club in \(G-F\), there is a \(p\)-\(q\) path of length at most \(s\) in \(G-F\). Choose such a shortest path \(P\). If \(P\) avoids \(u\), then \(P\) is also present in \(G'-F'\), and we are done. 

Otherwise, write \[ P=(p,\ldots,x,u,y,\ldots,q), \] where \(x\) and \(y\) are the predecessor and successor of \(u\) on \(P\). Since \(x,y\in N_G(u)\), and every neighbor of \(u\) is adjacent to each of \(v_1,\ldots,v_{k+1}\), choose any \(r\in\{v_1,\ldots,v_{k+1}\}\). Then \(xr,yr\in E(G)\). Moreover, \(x,y,r\in C\). By the inclusion-minimality of \(F\), the edges \(xr\) and \(yr\) are not deleted. Hence replacing the subpath \(x-u-y\) by \(x-r-y\) gives a \(p\)-\(q\) walk in \(G'-F'\) of length no larger than \(|P|\). Therefore \[ \dist_{G'-F'}(p,q)\le s. \] Thus \(C\setminus\{u\}\) remains an \(s\)-club, and so \(F'\) is a feasible solution for \((G',k)\). Hence \((G',k)\) is a yes-instance.

\medskip \noindent\emph{Backward direction.} Assume that \((G',k)\) is a yes-instance. Let \(F'\subseteq E(G')\) with \(|F'|\le k\) be such that \(G'-F'\) is a cluster of \(s\)-clubs. 

Since \(v_1,\ldots,v_{2k+3}\setminus\{u\}\) form a clique of size \(2k+2\), they remain in one connected component of \(G'-F'\); separating them would require deleting at least \(2k+1>k\) edges. Let \(C'\) denote this component. 

We first show that every neighbor of \(u\) belongs to \(C'\). Let \(w\in N_G(u)\). Then \(w\) is adjacent to all vertices \(v_1,\ldots,v_{k+1}\), and these vertices belong to \(C'\). If \(w\notin C'\), then all \(k+1\) edges from \(w\) to \(v_1,\ldots,v_{k+1}\) would have to be deleted by \(F'\), contradicting \(|F'|\le k\). Thus \(N_G(u)\subseteq C'\).

Now view \(F'\) as an edge set of \(G\). Since all neighbors of \(u\) already lie in \(C'\), adding \(u\) back does not merge \(C'\) with any other component. All components different from \(C'\) remain unchanged \(s\)-clubs. It remains to show that \[ C:=C'\cup\{u\} \] is an \(s\)-club in \(G-F'\).

Distances inside \(C'\) do not increase when \(u\) is added. Hence it suffices to show that every vertex \(a\in C'\) satisfies $\dist_{G-F'}(a,u)\le s$. Choose any vertex $u'\in \{v_{k+3},\ldots,v_{2k+3}\}\subseteq C'$.  As observed above, \(N_G[u']\subseteq N_G[u]\). Since \(C'\) is an \(s\)-club in \(G'-F'\), there is an \(a\)-\(u'\) path of length at most \(s\) in \(G'-F'\). Let \[ P=(a=w_0,w_1,\ldots,w_{\ell-1},w_\ell=u') \] be a shortest such path, so \(\ell\le s\).

If \(a=u'\), then \(a\) is adjacent to \(u\), and we are done. Otherwise, \(w_{\ell-1}\) is adjacent to \(u'\) in \(G'-F'\). Since \(N_G[u']\subseteq N_G[u]\), the vertex \(w_{\ell-1}\) is adjacent to \(u\) in \(G\). Moreover, no edge incident with \(u\) belongs to \(F'\), because \(u\notin V(G')\). Therefore the edge \(w_{\ell-1}u\) is present in \(G-F'\). Replacing the last vertex \(u'\) of \(P\) by \(u\) gives a walk  $(a=w_0,w_1,\ldots,w_{\ell-1},u)$  of length \(\ell\le s\) in \(G-F'\). Thus  $\dist_{G-F'}(a,u)\le s$. 
Since \(a\in C'\) was arbitrary, every vertex of \(C'\) is at distance at most \(s\) from \(u\). Hence \(C'\cup\{u\}\) is an \(s\)-club. Therefore \(F'\) is a feasible solution for \((G,k)\), and so \((G,k)\) is a yes-instance.

This proves both directions and completes the proof of Lemma~\ref{rr:nested-reduction}. 
\end{proof}

\subsection{Proof of Lemma~\ref{lem:IS-neighbour-irrelevant}} \begin{proof} \noindent\emph{Forward direction.} Assume that $(G,k)$ is a yes-instance. Let $F\subseteq E(G)$ be an inclusion-minimal solution with $|F|\le k$ such that $G-F$ is a cluster of $s$-clubs. Thus no edge of $F$ has both endpoints in the same connected component of $G-F$, since adding such an edge back would not merge two components and could only decrease distances inside that component. Let $v:=v_{a+2k+2}$ be the vertex deleted by the rule, and let $C_u$ and $C_v$ denote the connected components of $G-F$ containing $u$ and $v$, respectively. 
We distinguish two cases. 

\medskip \noindent\textbf{Case 1:} $C_u=C_v$. 

Let $C:=C_u=C_v$, define $G':=G-v$, and let $F':=F\cap E(G')$. We show that $C\setminus\{v\}$ remains an $s$-club in $G'-F'$. Let $p,q\in C\setminus\{v\}$. Since $C$ is an $s$-club in $G-F$, there is a $p$--$q$ path of length at most $s$ in $G-F$. If such a shortest path avoids $v$, then it is preserved in $G'-F'$. Otherwise, let \[ P=(p,\ldots,x,v,y,\ldots,q) \] be a shortest $p$--$q$ path in $G-F$ using $v$. Since $I_u\supseteq I_v$, every neighbor of $v$ in $G$ is also adjacent to $u$ in $G$. Hence $xu,yu\in E(G)$. Moreover, $x,u,y$ all belong to the same component $C$ of $G-F$. By the inclusion-minimality of $F$, the edges $xu$ and $yu$ are not deleted. Therefore both edges are present in $G-F$ and hence also in $G'-F'$. Replacing the subpath $x-v-y$ of $P$ by $x-u-y$ gives a $p$--$q$ walk of length no larger than $|P|$ in $G'-F'$. Hence \[ \dist_{G'-F'}(p,q)\le s. \] Thus $C\setminus\{v\}$ is an $s$-club in $G'-F'$, and all other components are unchanged. Therefore $(G',k)$ is a yes-instance.

\medskip \noindent\textbf{Case 2:} $C_u\neq C_v$. 

Since $I_v\cap I_v\neq\emptyset$, we may apply Claim~\ref{claim:merge-clubs} with $x=v$. Hence, if we restore all edges between $C_u$ and $C_v$, the union $C_u\cup C_v$ becomes an $s$-club. Let $F^\ast$ be the resulting edge-deletion set. Clearly $|F^\ast|\le |F|\le k$, and $G-F^\ast$ is still a cluster of $s$-clubs. In the graph $G-F^\ast$, the vertices $u$ and $v$ belong to the same $s$-club component. The replacement argument from Case~1 applies to $F^\ast$: any edge needed inside one of the original components is present by the inclusion-minimality of $F$, and any edge needed between $C_u$ and $C_v$ has been restored. Hence deleting $v$ yields a feasible solution for $(G',k)$. Therefore $(G',k)$ is a yes-instance.

\medskip \noindent\emph{Backward direction.} Let $G':=G-v$ and suppose that $(G',k)$ is a yes-instance. Choose an inclusion-minimal solution $F'\subseteq E(G')$ with $|F'|\le k$ such that $G'-F'$ is a cluster of $s$-clubs. Let $\mathcal C'$ be the set of connected components of $G'-F'$, and let $C_u$ be the component containing $u$. 

We first enlarge the component containing $u$. Let $\mathcal D$ be the set of components $D\in\mathcal C'$ with $D\neq C_u$ such that there exists $x\in D$ with $I_x\cap I_v\neq\emptyset$. Starting from $F'$, restore all edges between $C_u$ and every component $D\in\mathcal D$, and let the resulting edge-deletion set be $\widetilde F'$. Clearly $|\widetilde F'|\le |F'|\le k$.

The proofs of Claims~\ref{claim:middle-interval-squeezed} and \ref{claim:merge-clubs} use only the inherited interval representation and the fact that the considered sets are components of an edge-deleted interval graph; therefore they apply verbatim to the induced interval graph \(G'=G-v\), with \(I_v\) used as the reference interval.

For every component $D\in\mathcal D$, Claim~\ref{claim:merge-clubs}, applied to the original components $C_u$ and $D$, shows that $C_u\cup D$ is an $s$-club after restoring the edges between them. Moreover, for every such component $D$, Claim~\ref{claim:middle-interval-squeezed} implies that every vertex of $D$ is adjacent to $u$. Hence, after all restorations, any two vertices belonging to two different components of $\mathcal D$ are at distance at most two via $u$. Since $s\ge 2$, the resulting component containing $u$ is still an $s$-club. Let this component be denoted by $\widetilde C_u$. All other components remain unchanged $s$-clubs. 

Furthermore, after these restorations, all neighbors of $v$ lie in $\widetilde C_u$. Indeed, if $w\in N_G(v)$, then $I_w\cap I_v\neq\emptyset$, and therefore the component of $G'-F'$ containing $w$ is either $C_u$ itself or belongs to $\mathcal D$. 

We now view $\widetilde F'$ as an edge set of $G$. Since every neighbor of $v$ lies in $\widetilde C_u$, adding $v$ back only attaches $v$ to $\widetilde C_u$ and does not merge any other component. It remains to show that  $\widetilde C_u\cup\{v\}$  is an $s$-club in $G-\widetilde F'$. Let $p\in \widetilde C_u$. If \[ \dist_{G-\widetilde F'}(p,u)\le s-1, \] then the edge $uv$, which is not deleted by $\widetilde F'$, gives $\dist_{G-\widetilde F'}(p,v)\le s$. So assume that $\dist_{G-\widetilde F'}(p,u)=s$.

We claim that $p$ is not adjacent to $u$ in $G$. Indeed, if $p$ belonged to the original component $C_u$, then such an edge $pu$ would not be deleted by the inclusion-minimality of $F'$. If $p$ belonged to some component $D\in\mathcal D$, then the edge $pu$, if present in $G$, was restored when forming $\widetilde F'$. In either case, $pu$ would be present in $G-\widetilde F'$, contradicting $\dist_{G-\widetilde F'}(p,u)=s\ge 2$. Hence $p$ is not adjacent to $u$ in $G$, and therefore $I_p\cap I_u=\emptyset$. Thus $I_p$ lies completely either to the left or to the right of $I_u$. 

Assume without loss of generality that $I_p$ lies completely to the right of $I_u$; the other case is symmetric. We first show that some vertex among \[ v_a,v_{a+1},\ldots,v_{a+k} \] belongs to $\widetilde C_u$. Suppose not. Since all these $k+1$ vertices are adjacent to $u$, separating all of them from the component containing $u$ would require deleting at least $k+1$ distinct edges incident with $u$, contradicting $|\widetilde F'|\le k$. Hence there is a vertex  $v_i\in \{v_a,\ldots,v_{a+k}\}\cap \widetilde C_u$. 
Since $\widetilde C_u$ is an $s$-club in $G'-\widetilde F'$, there is a shortest $p$--$v_i$ path \[ P=(p=p_0,p_1,\ldots,p_\ell=v_i) \] in $G'-\widetilde F'$ with $\ell\le s$. By the path-interval union observation,  $\bigcup_{j=0}^{\ell} I_{p_j}$  is a connected interval containing both $I_p$ and $I_{v_i}$. 

The interval $I_v$ lies strictly to the right of $I_{v_i}$ and to the left of $I_p$. Moreover, since $v_i$ and $v$ are distinct vertices of the independent set, $I_v\cap I_{v_i}=\emptyset$. Therefore the connected union of intervals along $P$ intersects $I_v$ in some interval $I_{p_j}$ with $j<\ell$. Hence $v$ is adjacent to $p_j$ in $G$. Since $\widetilde F'\subseteq E(G')=E(G-v)$, no edge incident with $v$ is deleted by $\widetilde F'$. Thus the edge $vp_j$ is present in $G-\widetilde F'$, and consequently $ \dist_{G-\widetilde F'}(p,v)\le j+1\le \ell\le s$. 

The symmetric case, where $I_p$ lies completely to the left of $I_u$, is handled analogously using a vertex of  $\{v_{a+5k+6},\ldots,v_{a+6k+6}\}\cap \widetilde C_u$,  whose existence follows from the same $k+1$-edge deletion argument. 

Thus every vertex of $\widetilde C_u$ is at distance at most $s$ from $v$. Since $\widetilde C_u$ is already an $s$-club, the component $\widetilde C_u\cup\{v\}$ is an $s$-club. All other components remain unchanged. Therefore $\widetilde F'$ is a feasible solution for $(G,k)$, and hence $(G,k)$ is a yes-instance. 

This completes the proof of Lemma~\ref{thm:unitinterval-poly}. \end{proof}

\subsection{Proof of Lemma~\ref{lem:IS-neighbour-irrelevant}}
\begin{proof}
We first establish a structural lemma showing that no optimal solution needs to split a maximal clique.  Then we exploit the canonical clique path of a unit interval graph to obtain a dynamic‐programming algorithm.

\medskip
\noindent
\textbf{Lemma 1 (No Clique Splitting).}
Let $G$ be a unit interval graph with canonical clique path
$C_1,C_2,\dots,C_m$, where every vertex of $G$ appears in a contiguous subsequence of cliques and consecutive cliques intersect.
Let $F\subseteq E(G)$ be an edge-deletion set such that every connected component of $G-F$ has diameter at most~$s$.
Then there exists a deletion set $F^{\star}$ with $|F^{\star}|\le |F|$ such that
(i)~each maximal clique $C_i$ is wholly contained in one component of $G-F^{\star}$,
and (ii)~every component of $G-F^{\star}$ has diameter~$\le s$.

\begin{proof}
Let $F$ be a minimum-size feasible deletion set that minimizes
the number of split cliques.
Assume, for contradiction, that some $C_p$ is split by $F$:
there exist components $A$ and $B$ of $G-F$
with $A\cap C_p\neq\emptyset$ and $B\cap C_p\neq\emptyset$.
Thus $F$ contains all edges between $A\cap C_p$ and $B\cap C_p$.

Each component of $G-F$ spans at most $s$ consecutive cliques
of the canonical path.
Hence there exist index intervals $[L_A,R_A]$ and $[L_B,R_B]$
of length $\le s$ such that $A$ and $B$ are contained in
$\bigcup_{t\in[L_A,R_A]}C_t$ and $\bigcup_{t\in[L_B,R_B]}C_t$,
respectively, and both contain $C_p$.
Define two windows:
\[
W_L=[p-s+1,p]\cap[1,m]
\qquad\text{and}\qquad
W_R=[p,p+s-1]\cap[1,m].
\]
Construct $F^\star$ by assigning $C_p$ wholly to one side, say to the right:
\[
F^\star = \bigl(F\setminus \tbinom{C_p}{2}\bigr)
          \cup
          \bigl\{\, uv\in E(G)\mid u\in C_p,\ v\in C_t,\ t<p
                   \text{ or } t>p+s-1 \,\bigr\}.
\]
That is, we restore all internal edges of $C_p$ and cut all edges
from $C_p$ to cliques outside the right window.

\proofpara{Feasibility}
All vertices of $C_p$ now have neighbors only in
$\bigcup_{t\in W_R}C_t$, a union of at most $s$ consecutive cliques.
By the distance property of unit interval graphs,
this subgraph has diameter at most~$s$.
Hence every connected component of $G-F^\star$ has diameter~$\le s$.

\proofpara{Size}
$F^\star$ deletes no more edges than $F$:
we restore all edges internal to $C_p$
and add only edges from $C_p$ to outside the chosen window.
By symmetry, we can pick the cheaper of the two assignments
(to $W_L$ or $W_R$), ensuring $|F^\star|\le|F|$.

Finally, since all edges within $C_p$ are present in $G-F^\star$,
the clique $C_p$ is now contained entirely within one component,
and no new clique becomes split.
Thus $F^\star$ has strictly fewer split cliques than $F$,
contradicting the minimal choice of $F$.
Therefore no optimal solution splits a maximal clique.
\end{proof}

\medskip
\noindent
\textbf{Dynamic‐Programming Formulation.}
By Lemma 1, every component of an optimal solution consists of a contiguous block
of at most $s$ consecutive maximal cliques of the canonical clique path.
Hence an optimal solution corresponds exactly to a \emph{partition}
of the index set $\{1,\dots,p\}$ into consecutive blocks
$[b_1,e_1],[b_2,e_2],\dots,[b_r,e_r]$
such that $b_1=1$, $e_r=m$, and $b_{t+1}=e_t+1$.
For any boundary between two consecutive blocks, we must delete every edge having
endpoints in the two adjacent blocks.
Let $\mathrm{cost}(i,j)$ denote the number of edges having one endpoint in
$\bigcup_{t\le i}C_t$ and the other in $\bigcup_{t>i}C_t$ when we place a cut between
$C_i$ and $C_{i+1}$.
These costs can be computed in $O(n^2)$ preprocessing by scanning vertex intervals.
Define the DP table
$\mathrm{DP}[t]$= minimum number of deletions needed so that
$C_1,\dots,C_t$ are partitioned into valid $s$-club components.

The recurrence is
$\mathrm{DP}[t]=
\min_{1\le j\le s}
\bigl(\mathrm{DP}[t-j] + \mathrm{cost}(t-j,t-j+1)\bigr)$,
with base case $\mathrm{DP}[0]=0$ and $\mathrm{cost}(0,1)=0$.
Intuitively, we end the last component at $C_t$, whose length may be at most $s$,
and pay the deletion cost for the cut between $C_{t-j}$ and $C_{t-j+1}$.
Because there are $p=O(n)$ cliques in the canonical path and $s$ is fixed,
the DP runs in $O(sn)$ time once the cut costs are known.

Finally, reconstructing the partition from the DP yields the edge-deletion set
that achieves the minimum number of deletions.
Thus \textsc{$s$-Club Cluster Edge Deletion} is solvable in polynomial time on unit interval graphs.
\end{proof}

%% file: bibliography.bib
@Inbook{Erdos1987,
author="Erd{\"o}s, P.
and Szckeres, G.",
editor="Gessel, Ira
and Rota, Gian-Carlo",
title="A Combinatorial Problem in Geometry",
bookTitle="Classic Papers in Combinatorics",
year="1987",
publisher="Birkh{\"a}user Boston",
address="Boston, MA",
pages="49--56",
isbn="978-0-8176-4842-8",
doi="10.1007/978-0-8176-4842-8_3",
url="https://doi.org/10.1007/978-0-8176-4842-8_3"
}

@InProceedings{konstantinidis_et_al:LIPIcs.MFCS.2019.12,
  author =	{Konstantinidis, Athanasios L. and Papadopoulos, Charis},
  title =	{{Cluster Deletion on Interval Graphs and Split Related Graphs}},
  booktitle =	{44th International Symposium on Mathematical Foundations of Computer Science (MFCS 2019)},
  pages =	{12:1--12:14},
  series =	{Leibniz International Proceedings in Informatics (LIPIcs)},
  ISBN =	{978-3-95977-117-7},
  ISSN =	{1868-8969},
  year =	{2019},
  volume =	{138},
  editor =	{Rossmanith, Peter and Heggernes, Pinar and Katoen, Joost-Pieter},
  publisher =	{Schloss Dagstuhl -- Leibniz-Zentrum f{\"u}r Informatik},
  address =	{Dagstuhl, Germany},
  URL =		{https://drops.dagstuhl.de/entities/document/10.4230/LIPIcs.MFCS.2019.12},
  URN =		{urn:nbn:de:0030-drops-109568},
  doi =		{10.4230/LIPIcs.MFCS.2019.12}
}

@InProceedings{10.1007/978-3-642-17493-3_8,
author="Cao, Yixin
and Chen, Jianer",
editor="Raman, Venkatesh
and Saurabh, Saket",
title="Cluster Editing: Kernelization Based on Edge Cuts",
booktitle="Parameterized and Exact Computation",
year="2010",
publisher="Springer Berlin Heidelberg",
address="Berlin, Heidelberg",
pages="60--71",
isbn="978-3-642-17493-3"
}

@book{984029,
author = {Golumbic, Martin Charles},
title = {Algorithmic Graph Theory and Perfect Graphs (Annals of Discrete Mathematics, Vol 57)},
year = {2004},
isbn = {0444515305},
publisher = {North-Holland Publishing Co.},
address = {NLD}
}

@article{BONOMO2015600,
title = {A one-to-one correspondence between potential solutions of the cluster deletion problem and the minimum sum coloring problem, and its application to P4-sparse graphs},
journal = {Information Processing Letters},
volume = {115},
number = {6},
pages = {600-603},
year = {2015},
issn = {0020-0190},
doi = {https://doi.org/10.1016/j.ipl.2015.02.007},
url = {https://www.sciencedirect.com/science/article/pii/S0020019015000241},
author = {Flavia Bonomo and Guillermo Duran and Amedeo Napoli and Mario Valencia-Pabon}
}

@book{10.5555/984029,
author = {Golumbic, Martin Charles},
title = {Algorithmic Graph Theory and Perfect Graphs (Annals of Discrete Mathematics, Vol 57)},
year = {2004},
isbn = {0444515305},
publisher = {North-Holland Publishing Co.},
address = {NLD}
}

@inproceedings{Foldes1977SplitGraphs,
  author    = {F{\"o}ldes, S. and Hammer, P. L.},
  title     = {Split graphs},
  booktitle = {Proceedings of the 8th South-Eastern Conference on Combinatorics, Graph Theory and Computing},
  series    = {Congressus Numerantium},
  volume    = {19},
  pages     = {311--315},
  year      = {1977}
}

@Inbook{Berkhin2006,
author="Berkhin, P.",
editor="Kogan, Jacob
and Nicholas, Charles
and Teboulle, Marc",
title="A Survey of Clustering Data Mining Techniques",
bookTitle="Grouping Multidimensional Data: Recent Advances in Clustering",
year="2006",
publisher="Springer Berlin Heidelberg",
address="Berlin, Heidelberg",
pages="25--71",
isbn="978-3-540-28349-2",
doi="10.1007/3-540-28349-8_2",
url="https://doi.org/10.1007/3-540-28349-8_2"
}

@article{FORTUNATO201075,
title = {Community detection in graphs},
journal = {Physics Reports},
volume = {486},
number = {3},
pages = {75-174},
year = {2010},
issn = {0370-1573},
doi = {https://doi.org/10.1016/j.physrep.2009.11.002},
url = {https://www.sciencedirect.com/science/article/pii/S0370157309002841},
author = {Santo Fortunato}
}

@article{Alizadeh1995,
  author    = {F. Alizadeh and R. M. Karp and L. A. Newberg and D. K. Weisser},
  title     = {Physical mapping of chromosomes: A combinatorial problem in molecular biology},
  journal   = {Algorithmica},
  volume    = {13},
  number    = {1--2},
  pages     = {52--76},
  year      = {1995},
  doi       = {10.1007/BF01188581},
  publisher = {Springer}
}

@article{HARTUV2000249,
title = {An Algorithm for Clustering cDNA Fingerprints},
journal = {Genomics},
volume = {66},
number = {3},
pages = {249-256},
year = {2000},
issn = {0888-7543},
doi = {https://doi.org/10.1006/geno.2000.6187},
url = {https://www.sciencedirect.com/science/article/pii/S0888754300961871},
author = {Erez Hartuv and Armin O Schmitt and Jörg Lange and Sebastian Meier-Ewert and Hans Lehrach and Ron Shamir}
}

@article{
doi:10.1073/pnas.122653799,
author = {M. Girvan  and M. E. J. Newman },
title = {Community structure in social and biological networks},
journal = {Proceedings of the National Academy of Sciences},
volume = {99},
number = {12},
pages = {7821-7826},
year = {2002},
doi = {10.1073/pnas.122653799},
URL = {https://www.pnas.org/doi/abs/10.1073/pnas.122653799},
eprint = {https://www.pnas.org/doi/pdf/10.1073/pnas.122653799}}

@article{GAO20132763,
title = {The cluster deletion problem for cographs},
journal = {Discrete Mathematics},
volume = {313},
number = {23},
pages = {2763-2771},
year = {2013},
issn = {0012-365X},
doi = {https://doi.org/10.1016/j.disc.2013.08.017},
url = {https://www.sciencedirect.com/science/article/pii/S0012365X13003543},
author = {Yong Gao and Donovan R. Hare and James Nastos}
}

@article{doi:10.1142/S0129054107004656,
author = {DESSMARK, ANDERS and JANSSON, JESPER and LINGAS, ANDRZEJ and LUNDELL, EVA-MARTA and PERSSON, MIA},
title = {ON THE APPROXIMABILITY OF MAXIMUM AND MINIMUM EDGE CLIQUE PARTITION PROBLEMS},
journal = {International Journal of Foundations of Computer Science},
volume = {18},
number = {02},
pages = {217-226},
year = {2007},
doi = {10.1142/S0129054107004656},

URL = { 
    
        https://doi.org/10.1142/S0129054107004656
    
    

},
eprint = { 
    
        https://doi.org/10.1142/S0129054107004656
    
    

}
}

@book{10.5555/574848,
author = {Garey, Michael R. and Johnson, David S.},
title = {Computers and Intractability; A Guide to the Theory of NP-Completeness},
year = {1990},
isbn = {0716710455},
publisher = {W. H. Freeman \& Co.},
address = {USA}
}

@article{BONOMO201559,
title = {Complexity of the cluster deletion problem on subclasses of chordal graphs},
journal = {Theoretical Computer Science},
volume = {600},
pages = {59-69},
year = {2015},
issn = {0304-3975},
doi = {https://doi.org/10.1016/j.tcs.2015.07.001},
url = {https://www.sciencedirect.com/science/article/pii/S0304397515005800},
author = {Flavia Bonomo and Guillermo Durán and Mario Valencia-Pabon}
}

@INPROCEEDINGS{9030927,
  author={Barr, Joseph R. and Shaw, Peter and Abu-Khzam, Faisal N. and Chen, Jikang},
  booktitle={2019 First International Conference on Graph Computing (GC)}, 
  title={Combinatorial Text Classification: the Effect of Multi-Parameterized Correlation Clustering}, 
  year={2019},
  volume={},
  number={},
  pages={29-36},
  doi={10.1109/GC46384.2019.00013}}

@InProceedings{10.1007/978-3-030-75242-2_15,
author="Figiel, Aleksander
and Himmel, Anne-Sophie
and Nichterlein, Andr{\'e}
and Niedermeier, Rolf",
editor="Calamoneri, Tiziana
and Cor{\`o}, Federico",
title="On 2-Clubs in Graph-Based Data Clustering: Theory and Algorithm Engineering",
booktitle="Algorithms  and Complexity",
year="2021",
publisher="Springer International Publishing",
address="Cham",
pages="216--230",
isbn="978-3-030-75242-2"
}

@article{BAZGAN2025247,
title = {On the hardness of problems around s-clubs on split graphs},
journal = {Discrete Applied Mathematics},
volume = {364},
pages = {247-254},
year = {2025},
issn = {0166-218X},
doi = {https://doi.org/10.1016/j.dam.2025.01.023},
url = {https://www.sciencedirect.com/science/article/pii/S0166218X25000290},
author = {Cristina Bazgan and Pinar Heggernes and André Nichterlein and Thomas Pontoizeau}
}

@InProceedings{10.1007/978-3-642-29700-7_22,
author="Liu, Hong
and Zhang, Peng
and Zhu, Daming",
editor="Snoeyink, Jack
and Lu, Pinyan
and Su, Kaile
and Wang, Lusheng",
title="On Editing Graphs into 2-Club Clusters",
booktitle="Frontiers in Algorithmics and Algorithmic Aspects in Information and Management",
year="2012",
publisher="Springer Berlin Heidelberg",
address="Berlin, Heidelberg",
pages="235--246",
isbn="978-3-642-29700-7"
}

@article{Huffner2010ClusterVertexDeletion,
  author    = {Falk H{\"u}ffner and
               Christian Komusiewicz and
               Hannes Moser},
  title     = {Fixed-Parameter Algorithms for Cluster Vertex Deletion},
  journal   = {Theory of Computing Systems},
  volume    = {47},
  number    = {1},
  pages     = {196--217},
  year      = {2010},
  doi       = {10.1007/s00224-008-9150-x},
  url       = {https://doi.org/10.1007/s00224-008-9150-x}
}

@article{fadiel2006computational,
  title={Computational Analysis of Mass Spectrometry Data Using Novel Combinatorial Methods.},
  author={Fadiel, Ahmed and Langston, Michael A and Peng, Xinxia and Perkins, Andy D and Taylor, Hugh S and Tuncalp, Ozge and Vitello, D and Pevsner, Paul H and Naftolin, Frederick},
  journal={AICCSA},
  volume={6},
  pages={8--11},
  year={2006}
}

@InProceedings{10.1007/11847250_2,
author="Dehne, Frank
and Langston, Michael A.
and Luo, Xuemei
and Pitre, Sylvain
and Shaw, Peter
and Zhang, Yun",
editor="Bodlaender, Hans L.
and Langston, Michael A.",
title="The Cluster Editing Problem: Implementations and Experiments",
booktitle="Parameterized and Exact Computation",
year="2006",
publisher="Springer Berlin Heidelberg",
address="Berlin, Heidelberg",
pages="13--24",
isbn="978-3-540-39101-2"
}

@INPROCEEDINGS{9253144,
  author={Barr, Joseph R. and Shaw, Peter and Abu-Khzam, Faisal N. and Yu, Sheng and Yin, Heng and Thatcher, Tyler},
  booktitle={2020 Second International Conference on Transdisciplinary AI (TransAI)}, 
  title={Combinatorial Code Classification \& Vulnerability Rating}, 
  year={2020},
  volume={},
  number={},
  pages={80-83},
  doi={10.1109/TransAI49837.2020.00017}}

@article{doi:10.1142/S1793351X20500087,
author = {Barr, Joseph R. and Shaw, Peter and Abu-Khzam, Faisal N. and Thatcher, Tyler and Yu, Sheng},
title = {Vulnerability Rating of Source Code with Token Embedding and Combinatorial Algorithms},
journal = {International Journal of Semantic Computing},
volume = {14},
number = {04},
pages = {501-516},
year = {2020},
doi = {10.1142/S1793351X20500087},

URL = { 
    
        https://doi.org/10.1142/S1793351X20500087
    
    

},
eprint = { 
    
        https://doi.org/10.1142/S1793351X20500087
    
    

}
}

@InProceedings{10.1007/978-3-642-16926-7_17,
author="Heggernes, Pinar
and Lokshtanov, Daniel
and Nederlof, Jesper
and Paul, Christophe
and Telle, Jan Arne",
editor="Thilikos, Dimitrios M.",
title="Generalized Graph Clustering: Recognizing (p,q)-Cluster Graphs",
booktitle="Graph Theoretic Concepts in Computer Science",
year="2010",
publisher="Springer Berlin Heidelberg",
address="Berlin, Heidelberg",
pages="171--183",
isbn="978-3-642-16926-7"
}

@article{GUO2009718,
title = {A more effective linear kernelization for cluster editing},
journal = {Theoretical Computer Science},
volume = {410},
number = {8},
pages = {718-726},
year = {2009},
issn = {0304-3975},
doi = {https://doi.org/10.1016/j.tcs.2008.10.021},
url = {https://www.sciencedirect.com/science/article/pii/S0304397508007822},
author = {Jiong Guo}
}

@InProceedings{10.1007/3-540-44849-7_17,
author="Gramm, Jens
and Guo, Jiong
and H{\"u}ffner, Falk
and Niedermeier, Rolf",
editor="Petreschi, Rossella
and Persiano, Giuseppe
and Silvestri, Riccardo",
title="Graph-Modeled Data Clustering: Fixed-Parameter Algorithms for Clique Generation",
booktitle="Algorithms and Complexity",
year="2003",
publisher="Springer Berlin Heidelberg",
address="Berlin, Heidelberg",
pages="108--119",
isbn="978-3-540-44849-5"
}

@article{Bocker2011Exact,
  author    = {Sebastian B{\"o}cker and
               Stefanie Briesemeister and
               Gunnar W. Klau},
  title     = {Exact Algorithms for Cluster Editing: Evaluation and Experiments},
  journal   = {Algorithmica},
  volume    = {60},
  number    = {2},
  pages     = {316--334},
  year      = {2011},
  doi       = {10.1007/s00453-009-9339-7},
  url       = {https://doi.org/10.1007/s00453-009-9339-7}
}

@article{ABUKHZAM2023113864,
title = {An improved fixed-parameter algorithm for 2-Club Cluster Edge Deletion},
journal = {Theoretical Computer Science},
volume = {958},
pages = {113864},
year = {2023},
issn = {0304-3975},
doi = {https://doi.org/10.1016/j.tcs.2023.113864},
url = {https://www.sciencedirect.com/science/article/pii/S0304397523001779},
author = {Faisal N. Abu-Khzam and Norma Makarem and Maryam Shehab}
}

@article{TSUR2022106171,
title = {Cluster deletion revisited},
journal = {Information Processing Letters},
volume = {173},
pages = {106171},
year = {2022},
issn = {0020-0190},
doi = {https://doi.org/10.1016/j.ipl.2021.106171},
url = {https://www.sciencedirect.com/science/article/pii/S0020019021000867},
author = {Dekel Tsur}
}

@InProceedings{cao_et_al:LIPIcs.IPEC.2021.13,
  author =	{Cao, Yixin and Ke, Yuping},
  title =	{{Improved Kernels for Edge Modification Problems}},
  booktitle =	{16th International Symposium on Parameterized and Exact Computation (IPEC 2021)},
  pages =	{13:1--13:14},
  series =	{Leibniz International Proceedings in Informatics (LIPIcs)},
  ISBN =	{978-3-95977-216-7},
  ISSN =	{1868-8969},
  year =	{2021},
  volume =	{214},
  editor =	{Golovach, Petr A. and Zehavi, Meirav},
  publisher =	{Schloss Dagstuhl -- Leibniz-Zentrum f{\"u}r Informatik},
  address =	{Dagstuhl, Germany},
  URL =		{https://drops.dagstuhl.de/entities/document/10.4230/LIPIcs.IPEC.2021.13},
  URN =		{urn:nbn:de:0030-drops-153965},
  doi =		{10.4230/LIPIcs.IPEC.2021.13}
}

@article{BOCKER20095467,
title = {Going weighted: Parameterized algorithms for cluster editing},
journal = {Theoretical Computer Science},
volume = {410},
number = {52},
pages = {5467-5480},
year = {2009},
note = {Combinatorial Optimization and Applications},
issn = {0304-3975},
doi = {https://doi.org/10.1016/j.tcs.2009.05.006},
url = {https://www.sciencedirect.com/science/article/pii/S0304397509003521},
author = {S. Böcker and S. Briesemeister and Q.B.A. Bui and A. Truss}
}

@article{SHAMIR2004173,
title = {Cluster graph modification problems},
journal = {Discrete Applied Mathematics},
volume = {144},
number = {1},
pages = {173-182},
year = {2004},
note = {Discrete Mathematics and Data Mining},
issn = {0166-218X},
doi = {https://doi.org/10.1016/j.dam.2004.01.007},
url = {https://www.sciencedirect.com/science/article/pii/S0166218X04001957},
author = {Ron Shamir and Roded Sharan and Dekel Tsur}
}

@book{marekcygan,
  author    = {Marek Cygan and
               Fedor V. Fomin and
               Lukasz Kowalik and
               Daniel Lokshtanov and
               D{\'{a}}niel Marx and
               Marcin Pilipczuk and
               Michal Pilipczuk and
               Saket Saurabh},
  title     = {Parameterized Algorithms},
  publisher = {Springer},
  year      = {2015}
 }

@book{Downey,
 author = {Downey, Rodney G. and Fellows, M. R.},
 title = {Parameterized Complexity},
 year = {2012},
 publisher = {Springer},
}
